\newcommand\nix{\,\cdot\,}
\newcommand\bis{\mathrm{bis}}
\newcommand\cut{\mathrm{cut}}
\newcommand\ism{\cong}
\newcommand\G{\vec G}
\newcommand\T{\vec T}
\numberwithin{equation}{section}
\newcommand\br[1]{\left(#1\right)}
\renewcommand{\vec}[1]{\boldsymbol{#1}}
\DeclareMathOperator{\pr}{\mathbb P}
\newcommand\SIGMA{\vec\sigma}
\newcommand\TAU{\vec\tau}
\newcommand{\toboss}{\uparrow}
\newtheorem{definition}{Definition}[section]
\newtheorem{theorem}[definition]{Theorem}
\newtheorem{lemma}[definition]{Lemma}
\newtheorem{proposition}[definition]{Proposition}
\newtheorem{corollary}[definition]{Corollary}
\newcommand\core{\cC}
\newcommand\cC{\mathcal{C}}
\newcommand\cU{\mathcal{U}}
\newcommand\cS{\mathcal{S}}
\newcommand\cT{\mathcal{T}}
\newcommand\cL{\mathcal{L}}
\newcommand\cP{\mathcal{P}}
\newcommand\cX{\mathcal{X}}
\def\cC{{\mathcal C}}
\newcommand\eps{\varepsilon}
\newcommand\Erw{\mathbb{E}}
\newcommand{\vecone}{\vec{1}}
\newcommand{\Po}{{\rm Po}}
\newcommand{\Bin}{{\rm Bin}}
\newcommand\bc[1]{\left({#1}\right)}
\newcommand\cbc[1]{\left\{{#1}\right\}}
\newcommand\brk[1]{\left\lbrack{#1}\right\rbrack}
\newcommand\abs[1]{\left|{#1}\right|}
\newcommand\RR{\mathbb{R}}
\newcommand{\whp}{w.h.p.}
\newcommand{\stacksign}[2]{{\stackrel{\mbox{\scriptsize #1}}{#2}}}
\newcommand{\Erdos}{Erd\H{o}s}
\newcommand{\Renyi}{R\'enyi}
\newcommand\Lem{Lemma}
\newcommand\Prop{Proposition}
\newcommand\Thm{Theorem}
\newcommand\Sec{Section}
\newcommand{\pplus}{p_{+}}
\newcommand{\pminus}{p_{-}}
\newcommand{\dplus}{d_{+}}
\newcommand{\dminus}{d_{-}}
\newcommand{\ppm}{p_{\pm }}
\newcommand{\dpm}{d_{\pm }}
\newcommand{\gplus}{\gamma_{+}}
\newcommand{\gminus}{\gamma_{-}}
\newcommand{\gpm}{\gamma_{\pm }}
\newcommand{\partialpm}{\partial_{\pm }}
\newcommand{\splus}{S_{+}}
\newcommand{\sminus}{S_{-}}
\newcommand{\cpm}{C_{\pm }}
\newcommand{\cplus}{C_{+}}
\newcommand{\cminus}{C_{-}}
\newcommand{\zpplus}{Z_{p,+}}
\newcommand{\zpminus}{Z_{p,-}}
\newcommand{\cpl}{c^*}
\begin{document}

\title{The Minimum Bisection in the Planted Bisection Model}

\author{Amin Coja-Oghlan$^*$, Oliver Cooley$^{**}$, Mihyun Kang$^{**}$ and Kathrin Skubch}
\thanks{$^*$The research leading to these results has received funding from the European Research Council under the European Union's Seventh 
Framework Programme (FP/2007-2013) / ERC Grant Agreement n.\ 278857--PTCC\\
$^{**}$Supported by Austrian Science Fund (FWF): P26826 and W1230, Doctoral Program ``Discrete Mathematics''.}
\date{\today}

\address{Amin Coja-Oghlan, {\tt acoghlan@math.uni-frankfurt.de}, Goethe University, Mathematics Institute, 10 Robert Mayer St, Frankfurt 60325, Germany.}

\address{Oliver Cooley, {\tt cooley@math.tugraz.at}, Graz University of Technology, Institute of Optimization and Discrete Mathematics (Math B), Steyrergasse 30, 8010 Graz, Austria}

\address{Mihyun Kang, {\tt kang@math.tugraz.at}, Graz University of Technology, Institute of Optimization and Discrete Mathematics (Math B), Steyrergasse 30, 8010 Graz, Austria}

\address{Kathrin Skubch, {\tt skubch@math.uni-frankfurt.de}, Goethe University, Mathematics Institute, 10 Robert Mayer St, Frankfurt 60325, Germany.}

\maketitle

\begin{abstract}
\noindent
In the {\em planted bisection} model a random graph $\G(n,\pplus,\pminus )$ with $n$ vertices
is created by partitioning the vertices randomly into two classes of equal size (up to $\pm1$).
Any two vertices that belong to the same class are linked by an edge with probability $\pplus$ and any two
that belong to different classes with probability $\pminus <\pplus$ independently.
The planted bisection model has been used extensively to benchmark graph partitioning algorithms.
If $\ppm =2\dpm /n$ for numbers $0\leq \dminus <\dplus $ that remain fixed as $n\to\infty$, then
\whp\ the ``planted'' bisection (the one used to construct the graph) will not be a minimum bisection.
In this paper we derive an asymptotic formula for the minimum bisection width
 under the assumption that $\dplus -\dminus >c\sqrt{\dplus \ln \dplus }$
for a certain constant $c>0$.

\bigskip
\noindent
\emph{Mathematics Subject Classification:} 05C80 (primary), 05C15 (secondary)
\end{abstract}

\section{Introduction}\label{Sec_intro}

\subsection{Background and motivation}
Since the early days of computational complexity graph partitioning problems have played a central role in computer science~\cite{GJS,Karp}.
Over the years they have inspired some of the most important algorithmic techniques that we have at our disposal today,
	such as network flows or semidefinite programming~\cite{ARV,FKBisect,GW,Karpinski,Harry}.

In the context of the probabilistic analysis of algorithms, it is hard to think of a more intensely studied problem than
the {\em planted bisection model}.
In this model a random graph $\G=\G(n,p_{+1},p_{-1})$ on  $[n]=\{1,\ldots,n\}$ is created by choosing a map $\SIGMA:V\to\{-1,1\}$ uniformly at random subject to
$||\SIGMA^{-1}(1)|-|\SIGMA^{-1}(-1)||\leq1$ and connecting 
any two vertices $v\neq w$ with probability $p_{\SIGMA(v)\SIGMA(w)}$ independently,
where $0\leq p_{-1}<p_{+1}\leq 1$. To ease notation, we often write $\pplus$ for $p_{+1}$ and $\pminus$ for $p_{-1}$, and handle subscripts similarly for other parameters.

Given the random graph $\G$ (but not the planted bisection $\SIGMA$), the task is to find a {\em minimum bisection} of  $\G$, i.e., to partition
the vertices into two disjoint sets $S,\bar S=[n]\setminus S$ whose sizes satisfy $||S|-|\bar S||\leq1$ such that the number of $S$-$\bar S$-edges is minimum.
The planted bisection model has been employed to gauge algorithms based on spectral, semidefinite programming, flow and local search techniques, to name
but a few~\cite{BBScott,Boppana,Bui,Carson,Bisect,CondonKarp,Dimitriou,DyerFrieze,Feige,JerrumSorkin,Juels,Kucera,McSherry,Makarychev}.

Remarkably, for a long time the algorithm with the widest
range of $n,\ppm$ for which a minimum bisection can be found efficiently was one of the earliest ones, namely Boppana's spectral algorithm~\cite{Boppana}.
It succeeds if $$n(\pplus -\pminus )\geq c\sqrt{n\pplus \ln n}$$ for a certain constant $c>0$.
Under this assumption the planted bisection is minimum \whp\
In fact, recently the critical value $\cpl>0$ for which this statement is true was identified explicitly~\cite{MNS15}.
In particular, for $n(\pplus -\pminus )>\cpl \sqrt{n\pplus \ln n}$ \ the minimum bisection width simply equals $(\frac14+o(1))n^2\pminus $ \whp\

But if $n(\pplus -\pminus )< \cpl \sqrt{n\pplus \ln n}$, then the minimum bisection width will be strictly smaller than the width of the planted bisection \whp\
Yet there is another spectral algorithm~\cite{Bisect} that finds a minimum bisection \whp\ under the weaker assumption that
	\begin{equation}
	n(\pplus -\pminus )\geq c\sqrt{n\pplus \ln(n\pplus )},
	\end{equation}
for a certain constant $c>0$, and even certifies the optimality of its solution.
However, \cite{Bisect} does not answer what is arguably the most immediate question:
what is the {\em typical} value of the minimum bisection  width?

In this paper we derive the value to which the (suitably scaled) minimum bisection width converges in probability.
We confine ourselves to the case that $\frac n2 \ppm=\dpm$ remain fixed as $n\to\infty$.
Hence, the random graph $\G$ has bounded average degree.
This is arguably the most interesting case because the discrepancy between the planted and the minimum bisection gets larger as the graphs get sparser.
In fact, it is easy to see that in the case of fixed $\frac n2 \ppm=\dpm$ the difference between the planted and the minimum bisection width  is $\Theta(n)$
	as the planted bisection is not even locally optimal \whp\

Although we build upon some of the insights from~\cite{Bisect}, it seems difficult to prove our main result by tracing the 
fairly complicated algorithm from that paper.
Instead, our main tool is an elegant message passing algorithm called {\em Warning Propagation} that plays an important
role in the study of random constraint satisfaction problems via ideas from statistical physics~\cite{MM}.  
Running Warning Propagation on $\G$ naturally corresponds to a fixed point problem on the 2-simplex, and the minimum bisection width
can be cast as a function of the fixed point.

\subsection{The main result}\label{Sec_result}
To state the fixed point problem, we consider the functions
	\begin{align*}
	\psi&:\RR\to\RR,\qquad x\mapsto\begin{cases}-1&\mbox{ if }x<-1\\
			x&\mbox{ if }-1\leq x\leq 1\\
			1&\mbox{ if }x>1,
			\end{cases}
		&\tilde\psi&:\RR\to\RR,\qquad
			x\mapsto\begin{cases}-1&\mbox{ if }x\leq-1\\
			1&\mbox{ if }x>-1.
			\end{cases}
	\end{align*}
Let $\cP(\cbc{-1,0,1})$ be the set of probability measures on $\{-1,0,1\}$.
Clearly, we can identify $\cP(\cbc{-1,0,1})$ with the set of all maps $p:\{-1,0,1\}\to[0,1]$ such that $p(-1)+p(0)+p(1)=1$, i.e., the 2-simplex.
Further, let us define a map
	\begin{equation}\label{eqT}
	\cT_{\dplus ,\dminus }:\cP(\cbc{-1,0,1})\to\cP(\cbc{-1,0,1})
	\end{equation}
as follows.
Given $p\in \cP(\cbc{-1,0,1})$, let $(\eta_{p,i})_{i\geq1}$ be a family of i.i.d.\ $\{-1,0,1\}$-valued random variables with distribution $p$.
Moreover, let $\gpm=\Po(\dpm)$ 
be Poisson variables that are independent of each other and of the $\eta_{p,i}$.
Let
	\begin{equation}\label{eqZ}
	Z_{p,\dplus ,\dminus }:=\sum_{i=1}^{\gplus}\eta_{p,i}-\sum_{i=\gplus +1}^{\gplus+\gminus}\eta_{p,i}.
	\end{equation}
Then we let $\cT_{\dplus ,\dminus }(p)\in\cP(\{-1,0,1\})$ be the distribution of $\psi(Z_{p,\dplus ,\dminus })$.
Further, with $(\eta_{p,i})_{i\ge 1}$ and $\gpm$ as before, let
	\begin{align*}
	\varphi_{\dplus ,\dminus }&:\cP(\cbc{-1,0,1})\to\RR,&
	p&\mapsto\frac12\Erw\brk{
            \sum_{i=1}^{\gplus}\vecone{\cbc{\eta_{p,i}=-\tilde\psi(Z_{p,\dplus ,\dminus )}}}
            +\sum_{i=\gplus+1}^{\gplus+\gminus}\vecone{\cbc{\eta_{p,i}=\tilde\psi(Z_{p,\dplus ,\dminus )}} }}.
           \end{align*}
Moreover, let us call $p\in\cP(\cbc{-1,0,1})$ \emph{skewed} if $p(1)\geq 1-\dplus ^{-10}$.
Finally, we denote the minimum bisection width of a graph $G$ by $\bis(G)$.

\begin{theorem}\label{Thm_main}
There exists a constant $c>0$ such for any $\dpm>0$ satisfying
	$\dplus -\dminus \geq c\sqrt{\dplus \ln \dplus }$
the map $\cT_{\dplus ,\dminus }$ has a unique skewed fixed point $p^*$
and $n^{-1}\bis(\G)$ converges in probability to $\varphi_{d_+,d_-}(p^*)$.
\end{theorem}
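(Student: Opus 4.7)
My plan is to use Warning Propagation (WP) on $\vec G$ as the bridge between the graph and the analytic object $\cT_{\dplus,\dminus}$. Concretely, each directed edge $v\to u$ carries a message in $\{-1,0,+1\}$, and the WP update sends $\psi$ of the signed sum of incoming messages at $v$, the sign being $+1$ for a neighbour in the same planted class and $-1$ for the opposite class. Because depth-$O(1)$ neighbourhoods of $\vec G$ are \whp{} Galton--Watson trees with offspring $\gplus,\gminus$, after $T$ WP rounds the empirical distribution of messages is close in total variation to $\cT_{\dplus,\dminus}^{T}$ applied to the initial distribution, and the bisection obtained by decoding each vertex via $\tilde\psi$ of the incoming signed sum has cut width concentrating at $n\,\varphi_{\dplus,\dminus}(p^*)$ once the iteration has stabilised.

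\textbf{Stage 1: the skewed fixed point.} Starting from the Dirac mass $p_0$ with $p_0(1)=1$, I iterate $p_{t+1}=\cT_{\dplus,\dminus}(p_t)$. Poisson concentration for $Z_{p_t,\dplus,\dminus}$ under the hypothesis $\dplus-\dminus\ge c\sqrt{\dplus\ln\dplus}$ keeps $p_t(1)\ge 1-\dplus^{-10}$, so the iterates remain in the skewed simplex. Monotonicity of $\cT_{\dplus,\dminus}$ in a natural stochastic order gives convergence $p_t\to p^*$, while a contraction estimate on the skewed region (again via Poisson tails) forces uniqueness of the skewed fixed point.

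\textbf{Stage 2: the upper bound.} Initialise WP with $\OMEGA_{v\to u}^{(0)}=\SIGMA(v)$ (the planted sign) and run $T=T(\eps)$ rounds. The message on $v\to u$ after $T$ rounds is measurable with respect to the depth-$T$ neighbourhood of $(v,u)$, which is \whp{} a tree distributed as the Galton--Watson process with offspring $\gplus$ and $\gminus$; hence the empirical message distribution is close to $\cT_{\dplus,\dminus}^{T}(p_0)$, and by Stage 1 close to $p^*$. Decoding each vertex by $\tilde\psi$ and rebalancing a negligible set of vertices to satisfy $||S|-|\bar S||\le 1$ yields a bisection of width at most $n\,\varphi_{\dplus,\dminus}(p^*)+o(n)$ \whp.

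\textbf{Stage 3: the matching lower bound (main obstacle).} For an arbitrary bisection $\tau\colon V\to\{-1,+1\}$ I would initialise WP with $\OMEGA_{v\to u}^{(0)}=\tau(v)$ and analyse the iteration; the messages again concentrate around \emph{some} fixed point $q$ of $\cT_{\dplus,\dminus}$, not a priori equal to $p^*$. The delicate step is to argue that if $\tau$ is a near-minimum bisection, then $q$ must be skewed and hence coincide with $p^*$ by Stage 1: were $q$ not skewed, a constant fraction of vertices would disagree with their WP decoding, and pairing up $+\mapsto-$ with $-\mapsto+$ single-vertex flips would strictly reduce the cut while preserving balance, contradicting approximate minimality of $\tau$. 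Once $q=p^*$, the cut width of $\tau$ is bounded below by $n\,\varphi_{\dplus,\dminus}(p^*)-o(n)$, matching Stage 2. The principal obstacles are making the pairing/local-improvement argument quantitative in the presence of the global balance constraint, controlling the $O(\dplus^{-10})$ fraction of atypical vertices via a union bound over bisections, and ruling out \emph{non-skewed} fixed points of $\cT_{\dplus,\dminus}$ as candidates for $q$; the assumption $\dplus-\dminus\ge c\sqrt{\dplus\ln\dplus}$ is precisely what makes these three tasks feasible.
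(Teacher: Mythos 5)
Your Stages 1 and 2 are essentially sound and track the paper's strategy: the contraction of $\cT_{\dplus,\dminus}$ on the set of skewed measures (Banach fixed point, via Poisson/Chernoff tails) and the identification of WP on $\G$ initialised from the planted signs with the iterates $\cT^t_{\dplus,\dminus}(0,0,1)$ through local convergence to the two-type Galton--Watson tree are exactly Propositions~\ref{Prop_fix} and Lemmas~\ref{Lem_local},~\ref{Prop_swapped}. (One technical caveat even here: $\varphi_{\dplus,\dminus}(p^*)$ counts incoming messages disagreeing with $\tilde\psi$ of the local field, which is \emph{not} the same as the number of cut edges of the assignment $v\mapsto\tilde\psi(\mu_v)$; the paper never decodes and counts cut edges of that assignment, but instead shows via Lemma~\ref{Lem_dyn} and Corollaries~\ref{Cor_dyn},~\ref{Lem_first_level} that the message-disagreement count equals the size of an \emph{optimal} cut of the pendant tree components given the frozen core, so your ``decode and rebalance'' upper bound needs an extra argument.)

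The genuine gap is Stage 3. There is no sound route from an \emph{arbitrary} near-minimum bisection $\tau$ to a fixed point of $\cT_{\dplus,\dminus}$: the operator $\cT$ describes WP on the limiting tree with an initialisation that is independent of the tree given the types, whereas $\tau$ is graph-dependent (indeed adversarial), so the messages after $T$ rounds need not follow any iterate of $\cT$; moreover the local-limit statement (Lemma~\ref{Lem_local}) gives convergence in probability only, which cannot survive a union bound over $2^{\Theta(n)}$ bisections, and single-vertex-flip pairing arguments do not force a near-optimal bisection to align with the planted one -- that is precisely the hard ``freezing'' statement. The paper avoids analysing arbitrary bisections altogether: it imports from~\cite{Bisect} the core result (Lemma~\ref{Lemma_core}) that \whp\ there is a set $\core$ with $|\core|\ge n(1-\dplus^{-100})$ on which every optimal bisection agrees with $\SIGMA$ up to a global sign flip, so that $\bis(\G)=\cut(\G,\SIGMA_\cC)$, and the non-core vertices hang in small tree components. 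Both the upper and the lower bound then come for free from computing $\cut(\G,\SIGMA_\cC)$, which WP initialised at the planted signs determines exactly on each pendant tree after boundedly many rounds (Proposition~\ref{Prop_WP}). Without this core/freezing input (or a proof of it, which requires spectral or expansion arguments well beyond your local-improvement sketch), your lower bound does not go through.
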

In the following sections we will use that the assumptions of Theorem~\ref{Thm_main} allow us to assume that also $d_+$
is sufficiently large.

\subsection{Further related work}
Determining the minimum bisection width of a graph is NP-hard~\cite{GJS} and there is evidence that the problem does not even admit a PTAS~\cite{Khot}.
On the positive side, it is possible to approximate the minimum bisection width within a factor of $O(\ln n)$ for graphs on $n$ vertices in polynomial time~\cite{Harry}.

The planted bisection model has been studied in statistics under the name ``stochastic block model'' \cite{Holland}.
However, in the context of statistical inference the aim is to recover the planted partition $\SIGMA$ as best as possible given $\G$
rather than to determine the minimum bisection width.
Recently there has been a lot of progress, much of it inspired by non-rigorous work~\cite{Decelle}, on the statistical inference problem.
The current status of the problem is that matching upper and lower bounds are known for the values of $\dpm$ for which
it is possible to obtain a partition that is non-trivially correlated with $\SIGMA$ \cite{Massoulie,MNS12,MNS13}.
Furthermore, there are algorithms that recover a best possible approximation to $\SIGMA$ under certain conditions on $\dpm$~\cite{Abbe,MNS14,MNS15}.
But since our objective is different, the methods employed in the present paper are somewhat different and, indeed, rather simpler.

Finally, there has been recent progress on determining the minimum bisection width on the \Erdos-\Renyi\ random graph.
Although its precise asymptotics remain unknown in the case of bounded average degrees $d$, it was proved in~\cite{DemboMontanariSen}
that the main correction term corresponds to the ``Parisi formula'' in the Sherrington-Kirkpartrick model~\cite{Talagrand}.
Additionally, regarding the case of very sparse random graphs,
there is a sharp threshold for the minimum bisection width to be linear in $n$~\cite{LuczakMcD}.

Generally speaking, the approach that we pursue is somewhat related to the notion of ``local weak convergence'' of graph sequences as it was used in~\cite{Aldous}.
More specifically, we are going to argue that the minimum bisection width of $\G$ is governed by the ``limiting local structure'' of the graph,
which  is a two-type Galton-Watson tree.
The fixed point problem in \Thm~\ref{Thm_main} mirrors the execution of a message passing algorithm on the Galton-Watson tree.
The study of this fixed point problem, for which we use the {\em contraction method}~\cite{Ralph}, is the key technical ingredient of our proof.
We believe that this strategy provides an elegant framework for tackling many other problems in the theory of random graphs as well. In fact, in a recent paper~\cite{Cores} we combined Warning Propagation with a fixed point analysis on Galton-Watson trees to the k-core problem and in~\cite{Coloring} Warning Propagation was applied to the random graph coloring problem.

\section{Outline}

\noindent{\em From here on we keep the notation and the assumptions of \Thm~\ref{Thm_main}.
	In particular, we assume that $\dplus -\dminus \geq c\sqrt{\dplus \ln \dplus }$ for a large enough constant $c>0$
	and that $\dpm$ remain fixed as $n\to\infty$.  
	Furthermore we assume that $d_+$ is bounded from below by a large enough constant. 
	Throughout the paper all graphs will be locally finite and of countable size.
}

\bigskip\noindent
Three main insights enable the proof of \Thm~\ref{Thm_main}.
The first one, which we borrow from~\cite{Bisect}, is that \whp\ $\G$ features a fairly large set $\core$ of vertices such
that for any two optimal bisections $\tau_1,\tau_2$ of $\G$ (i.e.\ maps $\tau_1,\tau_2:V(\G)\to \{\pm 1\}$), we either have $\tau_1(v)=\tau_2(v)$ for all $v\in \core$ or $\tau_1(v)=-\tau_2(v)$
for all $v\in \core$.
In the language of random constraint satisfaction problems, the vertices in $\core$ are ``frozen''.
While there remain $\Omega(n)$ unfrozen vertices, the subgraph that they induce is subcritical, i.e., all components are of size $O(\ln n)$
and indeed most are of bounded size.

The second main ingredient is an efficient message passing algorithm called {\em Warning Propagation}, (cf.~\cite[Chapter~19]{MM}).
We will show that a bounded number of Warning Propagation iterations suffice to arrange almost all of the unfrozen vertices optimally
and thus to obtain a very good approximation to the minimum bisection \whp\ (Proposition~\ref{Prop_WP}).
This insight reduces our task to tracing Warning Propagation for a bounded number of rounds.

This last problem can be solved by studying Warning Propagation on a suitable Galton-Watson tree,
because $\G$ only contains a negligible number of short cycles \whp\ (Lemma~\ref{Lem_local}).
Thus, the analysis of Warning Propagation on the random tree is the third main ingredient of the proof.
This task will turn out to be equivalent to studying the fixed point problem from \Sec~\ref{Sec_result} (Proposition~\ref{Prop_fix}).
We proceed to outline the three main components of the proof.

\subsection{The core}\label{Sec_core}
Given a vertex $u$ of a graph $G$ let $\partial_{G}u$ denote the neighbourhood of $u$ in $G$. We sometimes omit the subscript $G$ when the graph is clear from the context.
More particularly, in the random graph $\G$, let $\partialpm u$ denote the set of all neighbours $w$ of $u$ in $\G$ with $\SIGMA(w)\SIGMA(v)=\pm1$.
Following~\cite{Bisect}, we define $\core$ as the largest subset $U\subset[n]$ such that 
	\begin{align}\label{eqCoreProperty}
	||\partialpm u|-\dpm|\leq \frac c4\sqrt{\dplus \ln \dplus }&\quad\mbox{and }\quad
		|\partial u\setminus U|\leq100\quad\mbox{for all }u\in U.
	\end{align}
Clearly, the set $\cC$, which we call the {\em core}, is uniquely defined because any union of sets $U$ that satisfy~(\ref{eqCoreProperty}) also has the property.
Let $\SIGMA_{\cC}:\cC\to\{\pm1\}$, $v\mapsto\SIGMA(v)$ be the restriction of the ``planted assignment'' to $\cC$.

Furthermore, 
for a graph $G$,
a set $U\subset V(G)$ and a map $\sigma:U\to\cbc{-1,1}$ we let
	$$\cut(G,\sigma):=\min\cbc{\sum_{\cbc{v,w}\in E(G)}\frac{1-\tau(v)\tau(w)}2\quad \bigg| \quad \tau : V(G)\to \cbc{\pm1}
		\mbox{ satisfies }\tau(v)=\sigma(v)\mbox{ for all }v\in U}.$$
In words, $\cut(G,\sigma)$ is the smallest number of edges in a cut of $G$ that separates the vertices in $U\cap\sigma^{-1}(-1)$ from
those in $U\cap\sigma^{-1}(1)$.
In particular, $\cut(G,\sigma_{\cC})$ is the smallest cut of $G$ that separates the vertices in the core $\cC$ that are frozen to $-1$ from those that are frozen to $1$.

Finally,  for any vertex $v$ we define a set $\cC_v=\cC_v(G,\sigma)$ of vertices via the following process.
	\begin{description}
	\item[C1] Let $\cC_v^{\bc 0}=\{v\}\cup\partial_{G} v$.
	\item[C2] Inductively, let $\cC_v^{(t+1)}=\cC_v^{(t)}\cup\bigcup_{u\in \cC_v^{(t)}\setminus\cC}\partial_{G} u$
		and let $\cC_v=\bigcup_{t\geq0}\cC_v^{(t)}$.
	\end{description}

\begin{lemma}[{\cite{Bisect}}]\label{Lemma_core}
We have $\bis(\G)=\cut(\G,\SIGMA_\cC)$ and $|\core|\geq n(1-\dplus ^{-100})$ \whp\
Furthermore, for any $\eps>0$ there exists $\omega>0$ such that \whp\ $\sum_{v\in\brk n}|\cC_v|\cdot\vecone\cbc{|\cC_v|\geq\omega}\leq\eps n$.
\end{lemma}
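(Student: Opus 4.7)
The three claims are borrowed from \cite{Bisect}, so my plan is to reconstruct that argument in three steps.

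\emph{Size of the core.} Let $U_0$ be the set of vertices $v$ for which $||\partialpm v|-\dpm|\leq \tfrac{c}{4}\sqrt{\dplus \ln\dplus}$. Conditional on $\SIGMA$, the quantities $|\partialpm v|$ are binomially distributed with mean $(1+o(1))\dpm$, so Chernoff's inequality gives $\pr[v\notin U_0]\leq 2\dplus^{-c^2/32}$. Thus for $c$ chosen large enough, $|U_0|\geq (1-\dplus^{-200})n$ \whp. To obtain $\cC$ I would peel iteratively: starting from $U_0$, at each round delete every vertex having more than $100$ neighbours outside the current set. A standard first-moment calculation shows that $\G$ contains no vertex set $S$ with $|S|=o(n)$ in which every vertex has at least $100$ neighbours in $S\cup([n]\setminus U_0)$, so the peeling removes at most $o(n)$ further vertices, yielding $|\cC|\geq (1-\dplus^{-100})n$.

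\emph{Minimum bisection equals cut.} Given a minimum bisection $\tau$, after possibly negating it globally, assume that $\tau$ agrees with $\SIGMA$ on more than half of $\cC$. Set $D=\{v\in\cC:\tau(v)\neq\SIGMA(v)\}$ and suppose $D\neq\emptyset$. Pick $u\in D$ minimising $|\partial u\cap D|$. By~(\ref{eqCoreProperty}), $u$ has $\dplus\pm \tfrac{c}{4}\sqrt{\dplus\ln\dplus}$ neighbours assigned $\SIGMA(u)$ by $\SIGMA$, $\dminus\pm \tfrac{c}{4}\sqrt{\dplus\ln\dplus}$ assigned $-\SIGMA(u)$, and at most $100$ non-core neighbours. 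The choice of $u$ forces most of its core neighbours to lie in $\cC\setminus D$, so flipping $\tau(u)$ changes the cut by at most $-(\dplus-\dminus)+O(\sqrt{\dplus\ln\dplus})+O(1)<0$ thanks to the hypothesis $\dplus-\dminus\geq c\sqrt{\dplus\ln\dplus}$. Flipping $u$ together with a symmetric partner in $[n]\setminus\cC$ to preserve the bisection constraint contradicts optimality. Hence $D=\emptyset$, meaning $\tau$ coincides with $\SIGMA_\cC$ on the core, which gives $\bis(\G)=\cut(\G,\SIGMA_\cC)$.

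\emph{Small components.} Unwinding the definitions, if $v\in[n]\setminus\cC$ then $\cC_v$ is the closed $\G$-neighbourhood of the connected component of $v$ in the subgraph induced on $[n]\setminus\cC$, whereas if $v\in\cC$ then $\cC_v=\{v\}\cup\partial v$. With high probability $|[n]\setminus\cC|\leq \dplus^{-100}n$, so the expected number of non-core neighbours of a fixed vertex is $O(\dplus^{-99})=o(1)$, and the component of a non-core vertex is stochastically dominated by a highly subcritical Galton-Watson tree. A first-moment calculation yields exponential tails for these component sizes; picking $\omega=\omega(\eps)$ large enough, Markov's inequality gives $\sum_v|\cC_v|\cdot\vecone\{|\cC_v|\geq\omega\}\leq \eps n$ \whp. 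The main obstacle is the rigidity step: to make the flipping argument work, one has to process $D$ in an order consistent with the peeling defining $\cC$, so that when $u\in D$ is considered, its core neighbours outside $D$ have already been reconciled with $\SIGMA$, and the bisection-size parity must be preserved by flipping paired vertices.
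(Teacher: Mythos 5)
The paper itself offers no proof of this lemma -- it is imported wholesale from \cite{Bisect} -- so your reconstruction can only be judged on its own merits, and there it has a genuine gap in the rigidity step. You pick $u\in D$ minimising $|\partial u\cap D|$ and claim this forces most core neighbours of $u$ to lie outside $D$. That is only true when $D$ is small. After the global negation you only know $|D|<|\cC|/2$, and if $|D|=\alpha n$ for a constant $\alpha$ bounded away from $0$, then the typical -- and hence also the minimal -- number of neighbours in $D$ of a vertex of $D$ is of order $\alpha(\dplus+\dminus)$, which dwarfs $\sqrt{\dplus\ln\dplus}$. The change in the cut from flipping $u$ is about $-(\dplus-\dminus)+2|\partial u\cap\cC\cap D|+O(\sqrt{\dplus\ln\dplus})+O(1)$, which is then positive, so the greedy single-vertex move does not help. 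Excluding optimal bisections that disagree with $\SIGMA$ on a linear-sized part of the core requires a global counting/discrepancy argument (edge counts between linear-sized vertex sets, in the spirit of \cite{Ofek}, compared against the width of the planted cut), which is exactly what \cite{Bisect} supplies and what is missing here. Relatedly, ``flipping $u$ together with a symmetric partner in $[n]\setminus\cC$'' is not harmless: the partner's flip can cost up to its degree, which you do not control, so even the small-$D$ case needs a more careful rebalancing; and the equality $\bis(\G)=\cut(\G,\SIGMA_\cC)$ also needs the converse direction, namely that a minimiser achieving $\cut(\G,\SIGMA_\cC)$ -- which by definition need not be balanced -- can be rebalanced into a bisection at zero cost \whp; you never address this.

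There are also smaller inaccuracies. For $v\in\cC$ the set $\cC_v$ is not $\{v\}\cup\partial v$: by {\bf C1--C2} the exploration continues through any non-core neighbours of $v$, so $\cC_v$ also contains the (closed neighbourhoods of the) non-core components attached to $v$. The stochastic domination by a very subcritical Galton--Watson tree cannot be asserted directly, because membership in $[n]\setminus\cC$ is determined by the edges themselves (through degrees and neighbours' status), so the claimed $O(\dplus^{-99})$ bound on the expected number of non-core neighbours requires an exposure or first-moment argument over component shapes rather than treating ``non-core'' as an independent mark. Finally, in the core-size step the assertion ``no vertex set $S$ with $|S|=o(n)$ \dots'' is not a well-formed event for fixed $n$; the peeling argument actually shows that the removed set cannot reach size $\delta n$ for a suitable small constant $\delta=\delta(\dplus)$, which is what you need to conclude $|\cC|\geq n(1-\dplus^{-100})$. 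These points are repairable; the large-$D$ case of the rigidity argument is the real missing idea.
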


\subsection{Warning Propagation}
To calculate $\cut(\G,\SIGMA_{\cC})$ we adopt the {\em Warning Propagation} (``WP'') message passing algorithm\footnote{%
	A discussion of Warning Propagation in the context of the ``cavity method'' from statistical physics can be found in~\cite{MM}.}.
Let us first introduce WP for a generic graph $G=(V(G),E(G))$ and a map $\sigma:U\subset V(G)\to\{-1,1\}$.
At each time $t\geq0$, WP sends a ``message'' $\mu_{v\to w}(t|G,\sigma)\in\{-1,0,1\}$ from $v$ to $w$ for any edge $\{v,w\}\in E(G)$.
The messages are directed objects, i.e., $\mu_{v\to w}(t|G,\sigma)$ and $\mu_{w\to v}(t|G,\sigma)$ may differ.
They are defined inductively by
	\begin{align}\label{eqWP}
	\mu_{v\to w}(0|G,\sigma)&:=\begin{cases}\sigma(v)&\mbox{ if }v\in U,\\0&\mbox{otherwise},\end{cases}&
	\mu_{v\to w}(t+1|G,\sigma)&:=\psi\bc{\sum_{u\in\partial v\setminus w}\mu_{u\to v}(t|G,\sigma)}.
	\end{align}
Thus, the WP messages are initialised according to $\sigma$.
Subsequently, $v$ sends message $\pm1$ to $w$ if it receives more $\pm1$ than $\mp1$ messages from its neighbours $u\neq w$.
If there is a tie, $v$ sends out $0$.
Finally, for $t\ge 0$ define
	\begin{align*}
	\mu_{v}(t|G,\sigma)&:=\sum_{w\in\partial v}\mu_{w\to v}(t|G,\sigma).
	\end{align*}

\begin{proposition}\label{Prop_WP}
For any $\eps>0$ there exists $t_0=t_0(\eps,\dplus ,\dminus )$ such that for all $t\ge t_0$ \whp
	\begin{align*}
	\abs{\cut(\G,\SIGMA_\cC)-
		\frac12\sum_{v\in[n]}\sum_{w\in\partial_{\G} v}\vecone{\cbc{\mu_{w\to v}(t|\G,\SIGMA)=-\tilde\psi\br{\mu_v(t|\G,\SIGMA)}}}}\leq\eps n.
	\end{align*}
\end{proposition}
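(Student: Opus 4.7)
The plan is to reduce the computation of $\cut(\G,\SIGMA_\cC)$ to local min-cut problems on the clusters $\cC_v$, and then to verify that Warning Propagation initialised from $\SIGMA$ implements the corresponding dynamic program on each cluster.

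First I would invoke \Lem~\ref{Lemma_core} to choose $\omega=\omega(\eps,\dplus,\dminus)$ such that, \whp, $\bis(\G)=\cut(\G,\SIGMA_\cC)$, $|\cC|\geq n(1-\dplus^{-100})$, and $\sum_v|\cC_v|\vecone\cbc{|\cC_v|\geq\omega}\leq\eps n/K$ for a constant $K$ absorbing the (polylogarithmic \whp) maximum degree. Set $t_0:=\omega+1$ and call $v$ \emph{good} if $|\cC_v|<\omega$. Since the total degree of bad vertices is then $O(\eps n)$, any error contribution from edges incident to bad vertices is absorbed into the $\eps n$ slack.

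The next step is \emph{core stability}: by induction on $t$, every core vertex $u$ and every neighbour $w$ satisfy $\mu_{u\to w}(t|\G,\SIGMA)=\SIGMA(u)$ for all $t\geq1$. At $t=1$ the core property~(\ref{eqCoreProperty}) gives $\sum_{v\in\partial u\setminus w}\SIGMA(v)=\SIGMA(u)(|\partial_+u|-|\partial_-u|)\pm O(1)$ of magnitude at least $(c/2)\sqrt{\dplus\ln\dplus}-O(1)$ in direction $\SIGMA(u)$, which is far from zero once $c$ and $\dplus$ are large, so $\psi$ returns $\SIGMA(u)$. In the inductive step the at most $100$ non-core neighbours of $u$ send messages of magnitude at most $1$ each and are dominated by the (now $\SIGMA$-valued) core neighbours. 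In particular every core vertex is \emph{strong}: $|\mu_u(t)|\geq 2$, so removing any one incoming message cannot flip $\tilde\psi(\mu_u(t))$.

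I would then propagate the stable core messages inward along good clusters. Because every non-core vertex of $\cC_v$ has its entire neighbourhood inside $\cC_v$ by construction, the only messages entering $\cC_v$ from outside come from core vertices on the boundary of $\cC_v$, and for $t\geq1$ those are constant. An induction on depth inside $\cC_v$ then shows that for $t\geq t_0$ every message on an edge internal to $\cC_v$ has reached the unique fixed point of the WP recursion with boundary $\SIGMA|_\cC$, independently of the initialisation on non-core vertices. Provided $H_v:=\G[\cC_v]$ is a tree, a short recursive check of~(\ref{eqWP}) identifies this fixed point with the standard min-cut dynamic program: if $f^{\pm}_{x\to y}$ denotes the minimum number of cut edges on the $x$-side of the edge $\{x,y\}$ in $H_v$ over extensions of $\SIGMA|_\cC$ that fix $x$ to $\pm1$, then $\mu_{x\to y}(t)=\psi(f^-_{x\to y}-f^+_{x\to y})$. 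Hence $\tilde\psi(\mu_v(t))$ is an optimal sign at $v$, and $\vecone\cbc{\mu_{w\to v}=-\tilde\psi(\mu_v)}$ flags precisely the cut edges, summed once per endpoint and halved.

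The hard part will be the error accounting. The discrepancies between $\cut(\G,\SIGMA_\cC)$ and the WP formula come from (a) edges incident to bad vertices, bounded by $\eps n$ via \Lem~\ref{Lemma_core}; (b) edges in clusters $\cC_v$ that contain a cycle, which by a routine first-moment computation involve only $o(n)$ vertices \whp\ since sparse random graphs are locally tree-like; and (c) exact ties that produce $0$ messages and can cascade through $\psi$. Sources (a) and (b) are routine. The principal obstacle will be (c): controlling the cascade of zero messages requires showing that under the boundary distribution induced by $\SIGMA_\cC$ on the two-type Galton-Watson local limit of $\G$, exact equalities in the min-cut recursion are rare. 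I expect this to follow from an absolute-continuity argument on the branching process together with a first-moment bound on the number of tied vertices, but it is the step that demands the most careful bookkeeping.
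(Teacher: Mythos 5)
Your overall route is the same as the paper's: freezing of the core messages by induction (the paper's Lemma~\ref{core_fixed}), stabilisation of the messages on the tree clusters $\cC_v$ after a bounded number of rounds, independently of the initialisation at non-core vertices (Lemma~\ref{lem_fixed_m}), and the identification of the stabilised message with the min-cut dynamic program, i.e.\ $\mu_{x\to y}=\psi\br{f^-_{x\to y}-f^+_{x\to y}}$, which is exactly the content of Lemma~\ref{Lem_dyn} and the identity~(\ref{opt_1}). Your error terms (a) (bad clusters, via Lemma~\ref{Lemma_core}) and (b) (cyclic clusters, via Lemma~\ref{Lem_local}) are also how the paper proceeds, up to minor bookkeeping (the paper controls the total degree of the exceptional set by a Chernoff bound conditioned on its size, which is cleaner than invoking a maximum-degree factor).

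The genuine gap is your item (c). Ties are not an error term that can be shown to be rare: a zero message occurs with probability bounded below by a constant depending only on $\dpm$ and not on $\eps$ (for instance $Z_{p}=0$ whenever $\gplus=\gminus=0$, which has probability $\eul^{-(\dplus+\dminus)}$; likewise non-core cluster leaves send $0$ from time $1$ on), so $\Theta(n)$ messages are tied \whp\ with a constant independent of $\eps$. For $\eps$ below that constant, no first-moment or absolute-continuity argument on the branching process can push the tie contribution under $\eps n$, so the step you flag as ``careful bookkeeping'' would in fact fail. The correct observation, which your write-up is missing, is that ties contribute \emph{no} error at all: if $\mu^*_{w\to v}=0$, then part (2) of Lemma~\ref{Lem_dyn} gives $\cut(\G_{w\to v},\SIGMA^{+1}_{w\to v})=\cut(\G_{w\to v},\SIGMA^{-1}_{w\to v})$, so an optimal extension of $\SIGMA_\cC$ on $\G_v$ can be chosen with $w$ on the same side as $v$, and since $\tilde\psi(\mu_v)\in\cbc{\pm1}$ a zero message never satisfies $\mu_{w\to v}=-\tilde\psi(\mu_v)$, so the WP count omits precisely these edges as well; this is exactly Corollary~\ref{Lem_first_level}. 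Hence on every acyclic cluster the per-vertex identity between cut edges and flagged messages holds exactly, ties included, and the only error terms are your (a) and (b).
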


\noindent
We defer the proof of \Prop~\ref{Prop_WP} to \Sec~\ref{Sec_Prop_WP}.

\subsection{The local structure}
\Prop~\ref{Prop_WP} shows that \whp\ in order to approximate $\cut(\G,\SIGMA_\cC)$ up to a small error of $\eps n$ we merely need to run WP for
a number $t_0$ of rounds that is bounded in terms of $\eps$.
The upshot is that the WP messages $\mu_{w\to v}(t|\G,\SIGMA)$ that are required to figure out the minimum bisection width
are determined by the {\em local} structure of~$\G$.
We show that the local structure of $\G$ ``converges to'' a suitable Galton-Watson tree. For this purpose, for simplicity we always say that the number of potential neighbours of any vertex in each class is $n/2$. This ignores the fact that if $n$ is odd the classes do not have quite this size and the fact that a vertex cannot be adjacent to itself. However, ignoring these difficulties will not affect our calculations in any significant way.

Our task boils down to studying WP on that Galton-Watson tree.
Specifically, let $\T=\T_{\dplus ,\dminus }$ be the Galton-Watson tree with two types $+1,-1$ and offspring matrix
	\begin{equation}\label{eqGW}
	\begin{pmatrix}\Po(\dplus )&\Po(\dminus )\\\Po(\dminus )&\Po(\dplus )\end{pmatrix}.
	\end{equation}
Hence, a vertex of type $\pm1$ spawns $\Po(\dplus )$ vertices of type $\pm1$ and independently $\Po(\dminus )$ vertices of type $\mp1$.
Moreover, the type of the root vertex $r_{\T}$ is chosen uniformly at random.
Let $\TAU=\TAU_{\dplus ,\dminus }:V(\T)\to\{\pm1\}$ assign each vertex of $\T$ its type.

The random graph $(\G,\SIGMA)$ ``converges to'' $(\T,\TAU)$ in the following sense.
For two triples $(G,r,\sigma)$, $(G',r',\sigma')$ of graphs $G,G'$, root vertices $r\in V(G)$, $r'\in V(G')$
and maps $\sigma:V(G)\to\{\pm1\}$, $\sigma':V(G')\to\{\pm1\}$
we write $(G,\sigma)\ism(G',\sigma')$ if there is a graph isomorphism $\varphi:G\to G'$ such that 
$\varphi(r)=r'$ and $\sigma=\sigma'\circ \varphi$.
Further, we denote by $\partial^t(G,r,\sigma)$ the rooted graph obtained from $(G,r)$ by deleting all vertices at distance
greater than $t$ from $r$ together with the restriction of $\sigma$ to this subgraph.
The following lemma characterises the local structure of $(\G,\SIGMA)$.

\begin{lemma}\label{Lem_local}
Let $t>0$ be an integer and let $T$ be any tree with root $r$ and map $\tau:V(T)\to \{\pm 1\}$.
Then
	$$\frac1n\sum_{v\in[n]}\vecone\cbc{\partial^t(\G,v,\SIGMA)\ism\partial^t(T,r,\tau)}\quad \stacksign{$n\to\infty$}{\to}\quad
			\pr\brk{\partial^t(\T,r_{\T},\TAU)\ism\partial^t(T,r,\tau)}\quad\mbox{in probability}.$$
Furthermore, \whp\
	$\G$ does not contain more than $\ln n$ vertices $v$ such that
	$\partial^t(\G,v,\SIGMA)$ contains a cycle.
\end{lemma}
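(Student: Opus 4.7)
The plan is to treat the two assertions separately. For the convergence statement, I would apply the second moment method based on a coupling between breadth-first exploration of $(\G,\SIGMA)$ from a fixed vertex and the two-type Galton--Watson tree $(\T,\TAU)$. Write $X_v:=\vecone\cbc{\partial^t(\G,v,\SIGMA)\ism\partial^t(T,r,\tau)}$ and $X:=\sum_{v\in[n]}X_v$. To compute $\EX[X_v]$ I would fix $v$, expose its depth-$t$ neighbourhood in $\G$ breadth first, and use that, conditionally on $\SIGMA(v)$ and any partial exploration of bounded size, the number of fresh $\pm1$-neighbours of the currently active vertex is distributed as a binomial with parameters $n/2-O(1)$ and $2\dpm/n$. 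Since $t$ is a constant and the total explored size stays $O(1)$ with probability $1-o(1)$, a routine total-variation bound replaces these binomials by $\Po(\dpm)$ variables, matching the offspring distribution in~\eqref{eqGW}. As $\SIGMA(v)$ and $\TAU(r_{\T})$ are both uniform on $\{\pm1\}$, this yields $\EX[X_v]\to\pr[\partial^t(\T,r_{\T},\TAU)\ism\partial^t(T,r,\tau)]$, and averaging over $v$ gives the corresponding limit for $\EX[X]/n$.

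For concentration I would run the depth-$t$ explorations from two distinct vertices $u\neq v$ jointly. Because each exploration touches $O(1)$ vertices in expectation and queries disjoint blocks of independent Bernoulli variables outside a ``collision'' event (where one exploration reaches the other) of probability $O(1/n)$, we obtain $\EX[X_uX_v]-\EX[X_u]\EX[X_v]=O(1/n)$ uniformly in $u\neq v$. Summing over pairs yields $\Var(X)=O(n)$, so $\Var(X/n)=O(1/n)$, and Chebyshev's inequality gives the desired convergence in probability.

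Finally, let $Y$ count the vertices $v\in[n]$ for which $\partial^t(\G,v,\SIGMA)$ contains a cycle. For a fixed $v$, any such cycle witnesses a ``tadpole'' rooted at $v$: a path of $\ell_1\le t$ edges from $v$ to a vertex of a cycle of length $\ell_2$ entirely contained in the $t$-neighbourhood, involving in total $k=\ell_1+\ell_2$ vertices and $k$ edges, with $k$ bounded by a constant $C(t)$. The number of vertex-labelled embeddings of such a shape rooted at $v$ is $O(n^{k-1})$ and each of its $k$ edges occurs independently with probability at most $2\dplus/n$, so the probability that $v$ roots some tadpole is $O(n^{-1})$. Summing over $v$ gives $\EX[Y]=O(1)$, and Markov's inequality yields $\pr[Y>\ln n]=O(1/\ln n)=o(1)$. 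The main technical hurdle across the whole argument is making the inductive exploration coupling tight enough to control the conditional total-variation error that accumulates through $t$ rounds; because $t$ is a fixed integer and the explored neighbourhoods stay of bounded size w.h.p., however, this reduces to a standard and essentially routine calculation.
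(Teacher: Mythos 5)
Your proposal is correct and follows the same overall strategy as the paper: show that the empirical frequency of a given depth-$t$ neighbourhood has the Galton--Watson probability as its limiting expectation and has vanishing fluctuations, then conclude by Chebyshev. The execution differs in two places. For the first moment the paper argues by induction on $t$, conditioning on the root degree and decomposing $T$ into isomorphism classes of child subtrees, whereas you run a breadth-first exploration coupled to the two-type branching process via binomial-to-Poisson total-variation bounds; these are two standard implementations of the same computation, and yours works because $t$ is fixed and on the event in question the explored ball has bounded size, so the accumulated coupling error is $o(1)$. For the second moment both arguments rest on near-independence of the bounded neighbourhoods of two random roots; your uniform $O(1/n)$ covariance bound is stronger than necessary (the paper only uses that the covariance is $o(1)$), and either suffices for Chebyshev. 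The clearest genuine divergence is the cycle statement: the paper dominates the number of short cycles by those of $G(n,d_+/n)$ and invokes the Poisson limit for cycle counts from the random graphs literature, while you bound directly the expected number of vertices rooting a bounded ``tadpole'' and apply Markov with threshold $\ln n$; your route is self-contained and arguably more transparent. One small point you should make explicit: a cycle contained in $\partial^t(\G,v,\SIGMA)$ need not be short a priori, so to justify $k\le C(t)$ you should first reduce to a cycle of length at most $2t+1$ inside the ball (for instance, take a non-tree edge of the BFS tree of the ball and close it through the lowest common ancestor of its endpoints); with that one-line reduction your tadpole count, and hence the whole argument, is complete.
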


\noindent
\begin{proof}
 Given a tree $T$ with root $r$ and map $\tau:V(T)\to \{\pm 1\}$, let
 \begin{align*}
 X_t=X_t(T,r,\tau)=\frac1n\sum_{v\in[n]}\vecone\cbc{\partial^t(\G,v,\SIGMA)\ism\partial^t(T,r,\tau)}
 \end{align*}
and
 \begin{align*}
 p_t=p_t(T,r,\tau)=\pr\brk{\partial^t(\T,r_{\T},\TAU)\ism\partial^t(T,r,\tau)}.
 \end{align*}
The proof proceeds by induction on $t$. If $t=0$, pick a vertex $\vec v\in [n]$ uniformly at random, then
$X_0=\pr_{\vec v}\br{\SIGMA(\vec v)=\tau(r)}=\frac 12$ and $p_0=\pr_{\T}\br{\TAU(r_{\T})=\tau(r)}=\frac 12$ for any 
$\tau(r)\in\{\pm1\}.$ To proceed from $t$ to $t+1$, let $d$ denote the number of children $v_1,\ldots,v_d$ of $r$
in $T$. For each $i=1,\ldots,d$, let $T_i$ denote the tree rooted at $v_i$ in the forest obtained from $T$ by removing $r$ and 
let $\tau_i:V(T_i)\to\{\pm1\}$ denote the restriction of $\tau$ to the vertex set of $T_i$. Finally, let
$C_1,\ldots, C_{\tilde d}$ for some $\tilde d\leq d$ denote the distinct isomorphism classes 
among $\{\partial^t(T_i,v_i,\tau_i):i=1,\ldots,d\}$, and let $c_j=|\{i:\partial^t(T_i,v_i,\tau_i)\in C_j\}|$. Let $v\in[n]$
be an arbitrary vertex in $\G$. Our aim is to determine the probability of the event 
$\{\partial^{t+1}(\G,v,\SIGMA)\ism\partial^{t+1}(T,r,\tau)\}$. Therefore, we think of $\G$ as being created in three rounds.
First, partition $[n]$ in two classes. Second, randomly insert edges between vertices in $[n]\setminus\{v\}$
according to their planted sign. Finally, reveal the neighbours of $v$.
For the above event to happen, $v$ must have $d$ neighbours in $\G$.
Since $|\partial_{\pm}v|$ are independent binomially distributed random variables with parameters $\frac n2$ and $p_{\pm}$ 
and because $\frac n2 p_{\pm}= d_{\pm}$, we may approximate $|\partial_{\pm}v|$ with a poisson distribution, and $v$ has degree $d$
with probability
  \begin{align*}
    \frac{(d_++d_-)^d}{d!\exp(d_++d_-)}+\mbox{o}(1).
  \end{align*}
Conditioned on $v$ having degree $d$, by induction $v$ is adjacent to precisely $c_j$ vertices 
with neighbourhood isomorphic to $\partial^t(T_i,v_i,\tau_i)\in C_j$ with probability
  \begin{equation*}
    \binom{d}{c_1\ldots c_{\tilde d}}\prod_{j=1}^{\tilde d}p_{t}(C_j) + \mbox{o}(1).
  \end{equation*}
The number of cycles of length $\ell \le 2t+3$ in $\G$ is stochastically bounded by the number of such cycles in $\G(n,d_+/n)$ (the standard $1$-type binomial random graph). For each $\ell$, this number tends in distribution to a poisson variable with bounded mean (see e.g. Theorem~3.19 in~\cite{JLR}) and so the total number of such cycles is bounded
\whp\ Thus all the pairwise distances (in $\G-v$) between neighbours of $v$ are at least $2t+1$ \whp\ (and in particular this proves the second part of the lemma).
Therefore
  \begin{align*}
    \Erw_{\G}[X_{t+1}]=\frac{(d_++d_-)^d}{d!\exp(d_++d_-)}
    \binom{d}{c_1\ldots c_{\tilde d}}\prod_{j=1}^{\tilde d}p_{t}(C_j) + \mbox{o}(1).
  \end{align*}
By definition of $\T$, we obtain $\Erw[X_{t+1}]=p_{t+1}+\mbox{o}(1).$ To apply Chebyshev's inequality, it remains to
determine $\Erw[X_{t+1}^2]$. Let $\vec v,\vec w\in[n]$ be two randomly choosen vertices. Then \whp\ 
$\vec v$ and $\vec w$ have distance 
at least $2t+3$ in $\G$, conditioned on which $\partial^{t+1}(\G,\vec v,\SIGMA)$ and $\partial^{t+1}(\G,\vec w,\SIGMA)$ are independent. Therefore we obtain
  \begin{align*}
   &\pr_{\vec v,\vec w}\br{\partial^{t+1}(\G,\vec v,\SIGMA)\ism\partial^{t+1}(T,r,\tau)\wedge\partial^{t+1}(\G,\vec w,\SIGMA)\ism\partial^{t+1}(T,r,\tau)}\\
   &\quad=\pr_{\vec v}\br{\partial^{t+1}(\G,\vec v,\SIGMA)\ism\partial^{t+1}(T,r,\tau)}
    \pr_{\vec w}\br{\partial^{t+1}(\G,\vec w,\SIGMA)\ism\partial^{t+1}(T,r,\tau)}+\mbox{o}(1)\\
  \end{align*}
And finally
  \begin{align*}
    \Erw_{\G}[X_{t+1}^2]&=\frac 1n \Erw_{\G}[X_{t+1}] + \Erw_{\G}\brk{\pr_{\vec v}\br{\partial^{t+1}(\G,\vec v,\SIGMA)\ism\partial^{t+1}(T,r,\tau)}
    \pr_{\vec w}\br{\partial^{t+1}(\G,\vec w,\SIGMA)\ism\partial^{t+1}(T,r,\tau)}}+\mbox{o}(1)\\
    &= \Erw_{\G}[X_{t+1}]^2 +\mbox{o}(1).
  \end{align*}
The first assertion follows from Chebyshev's inequality. 
\end{proof}

\subsection{The fixed point}

Let $(T,r,\tau)$ be a rooted tree together with a map $\tau:V(T)\to\{\pm1\}$.
Then for any pair $v,w$ of adjacent vertices we have the WP messages $\mu_{v\to w}(t|T,\tau)$, $t\geq0$, as defined in (\ref{eqWP}).
Since we are going to be particularly interested in the messages directed towards the root, we introduce the following notation.
Given the root $r$, any vertex $v\neq r$ of $T$ has a unique parent vertex $w$ (the neighbour of $v$ on the unique path from $v$ to $r$).
Initially, let
  \begin{equation}\label{eq_tree_init}
    \mu_{v\toboss}(0|T,r,\tau)=\tau(v)
  \end{equation}
and define
	\begin{equation}\label{eq_init_boundary}
	\mu_{v\toboss}(t|T,r,\tau)=\mu_{v\to w}(t|T,\tau)
	\end{equation}
for $t>0$.
In addition, set $\mu_{r\toboss}(0|T,r,\tau)=\tau(r)$ and let
	\begin{equation}\label{eq_root_tree}
	\mu_{r\toboss}(t+1|T,r,\tau)=\psi\br{\sum_{v\in\partial_{\T} r}\mu_{v\toboss}(t|T,r,\tau)}\qquad(t\geq0)
	\end{equation}
be the message that $r$ would send to its parent if there was one.

For $p=(p(-1),p(0),p(1))\in\cP(\{-1,0,1\})$ we let $\bar p=(p(1),p(0),p(-1))$.
Remembering the map $$\cT=\cT_{\dplus ,\dminus }:\cP(\{-1,0,1\})\to\cP(\{-1,0,1\})$$
from \Sec~\ref{Sec_result} and writing $\cT^{t}$ for its $t$-fold iteration, we observe the following.

\begin{lemma}\label{Prop_swapped}
Let $p_t=\cT^{t}(0,0,1)$. 
\begin{enumerate}
\item Given that $\TAU(r_{\T})=+1$, the message
	$\mu_{r_{\T}\toboss}(t|\T,r_{\T},\TAU)$ has distribution $p_t$.
\item Given that $\TAU(r_{\T})=-1$, the message
	$\mu_{r_{\T}\toboss}(t|\T,r_{\T},\TAU)$ has distribution $\bar p_t$.
\end{enumerate}
\end{lemma}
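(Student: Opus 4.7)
The plan is to prove both statements simultaneously by induction on $t$, exploiting the recursive self-similarity of the Galton-Watson tree $\T$.

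\textbf{Base case.} For $t=0$, by \eqref{eq_tree_init} we have $\mu_{r_{\T}\toboss}(0|\T,r_{\T},\TAU)=\TAU(r_{\T})$, so conditional on $\TAU(r_{\T})=+1$ the message equals $+1$ deterministically, which matches $p_0=\cT^0(0,0,1)=(0,0,1)$; the case $\TAU(r_{\T})=-1$ is analogous and yields $\bar p_0=(1,0,0)$.

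\textbf{Inductive step.} Assume the claim for $t$. Condition on $\TAU(r_{\T})=+1$ and on the offspring numbers: by \eqref{eqGW} the root has $\gplus\sim\Po(d_+)$ children of type $+1$ and, independently, $\gminus\sim\Po(d_-)$ children of type $-1$. For each child $v$ of $r_{\T}$, let $T_v$ denote the subtree of $\T$ rooted at $v$ with the inherited type map $\tau_v$. By construction of the Galton-Watson tree, conditional on the type of $v$, $(T_v,v,\tau_v)$ is distributed as $(\T,r_{\T},\TAU)$ conditional on $\TAU(r_{\T})$ equalling the type of $v$, and the $(T_v,\tau_v)$ for distinct children $v$ are mutually independent. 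The key observation is that by \eqref{eqWP} and \eqref{eq_init_boundary}, the message $\mu_{v\toboss}(t|\T,r_{\T},\TAU)=\mu_{v\to r_{\T}}(t|\T,\TAU)$ depends only on $(T_v,\tau_v)$, and in fact equals $\mu_{v\toboss}(t|T_v,v,\tau_v)$. Applying the inductive hypothesis to each $T_v$: for the $\gplus$ same-type children the message has distribution $p_t$, and for the $\gminus$ opposite-type children it has distribution $\bar p_t$.

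\textbf{Matching the operator $\cT$.} Let $(\eta_{p_t,i})_{i\geq 1}$ be i.i.d.\ with distribution $p_t$, independent of $\gpm$. Since $\bar p_t$ is by definition the law of $-\eta_{p_t,i}$, we may couple the messages from the $-1$-children to be $-\eta_{p_t,i}$ for $i=\gplus+1,\dots,\gplus+\gminus$, and those from the $+1$-children to be $\eta_{p_t,i}$ for $i=1,\dots,\gplus$. Then by \eqref{eq_root_tree},
\begin{equation*}
\mu_{r_{\T}\toboss}(t+1|\T,r_{\T},\TAU)=\psi\!\left(\sum_{i=1}^{\gplus}\eta_{p_t,i}-\sum_{i=\gplus+1}^{\gplus+\gminus}\eta_{p_t,i}\right)=\psi(Z_{p_t,d_+,d_-}),
\end{equation*}
which by \eqref{eqT}--\eqref{eqZ} has law $\cT_{d_+,d_-}(p_t)=p_{t+1}$, proving part (1). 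For part (2), condition on $\TAU(r_{\T})=-1$: now $\gplus$ counts the same-type ($-1$) children and $\gminus$ the opposite-type ($+1$) children, and the roles of $p_t$ and $\bar p_t$ are swapped. Using $\psi(-x)=-\psi(x)$ one obtains $\mu_{r_{\T}\toboss}(t+1|\T,r_{\T},\TAU)=-\psi(Z_{p_t,d_+,d_-})$, whose distribution is $\overline{p_{t+1}}=\bar p_{t+1}$.

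\textbf{Main obstacle.} The only subtle point is the bookkeeping of signs: one must keep straight that it is the \emph{product} $\TAU(r_{\T})\cdot\TAU(v)$ (not $\TAU(v)$ alone) that determines whether $v$ contributes $\eta_{p_t,i}$ or $-\eta_{p_t,i}$ to the sum inside $\psi$. Once this is laid out correctly, the induction is mechanical and the fact that the operator $\cT_{d_+,d_-}$ is defined precisely so that $\psi(Z_{p_t,d_+,d_-})\sim p_{t+1}$ closes the argument.
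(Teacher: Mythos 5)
Your proof is correct and follows essentially the same route as the paper: induction on $t$ using the branching (self-similarity) property of the Galton--Watson tree, the independence of the subtrees hanging off the root, and the identification of the message distributions $p_t$ resp.\ $\bar p_t$ with the operator $\cT$. You merely spell out more explicitly two points the paper leaves implicit, namely that the message $\mu_{v\toboss}(t|\T,r_{\T},\TAU)$ equals the root message of the pendant subtree, and the sign bookkeeping via $\psi(-x)=-\psi(x)$ for the case $\TAU(r_{\T})=-1$.
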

\begin{proof}
The proof is by induction on $t$.
In the case $t=0$ the assertion holds because $\mu_{r_{\T}\toboss}(0|\T,r_{\T},\TAU)=\TAU(r_{\T})$.
Now, assume that the assertion holds for $t$.
To prove it for $t+1$, let $\cpm$ be the set of all children $v$ of $r_{\T}$ with $\TAU(r_{\T})\TAU(v)=\pm1$. 
By construction, $|\cpm|$ has distribution $\Po(\dpm)$.
Furthermore, let $(\T_v,v,\tau_v)$ signify the subtree pending on a child $v$ of $r_{\T}$.
Because $\T$ is a Galton-Watson tree, the random subtrees $\T_v$ are mutually independent.
Moreover, each $\T_v$ is distributed as a Galton-Watson tree with offspring matrix (\ref{eqGW}) and a root vertex of type
$\pm\TAU(r_{\T})$ for each $v\in \cpm$.
Therefore, by induction the message $\mu_{v\toboss}(t|\T_v,v,\TAU_v)$ has distribution $p_t$ if $\TAU(v)=1$ resp.\ $\bar p_t$ if $\TAU(v)=-1$.
As a consequence,
\begin{align*}
    \mu_{r_{\T}\toboss}(t+1|\T,r_{\T},\TAU)
    =\psi\bc{\sum_{v\in \cplus}\mu_{v\toboss}(t|\T_v,v,\TAU_v)
    +\sum_{v\in \cminus}\mu_{v\toboss}(t|\T_v,v,\TAU_v)}
\end{align*}
has distribution $p_{t+1}$ if $\TAU(r_{\T})=1$ and $\bar p_{t+1}$ otherwise.
\end{proof}

\Lem~\ref{Prop_swapped} shows that the operator $\cT$ mimics WP on the Galton-Watson tree $(\T,r_{\T},\TAU)$.
Hence, to understand the behaviour of WP after a large enough number of iterations we need to investigate the fixed point to which $\cT^{t}(0,0,1)$ converges
as $t\to\infty$.
In \Sec~\ref{Sec_fix} we will establish the following.

\begin{proposition}\label{Prop_fix}
The operator $\cT$ has a unique skewed fixed point $p^*$ and $\lim_{t\to\infty}\cT^{t}(0,0,1)=p^*$.
\end{proposition}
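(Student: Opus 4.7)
The plan is to verify the hypotheses of Banach's fixed-point theorem on the set
\[
\cS=\{p\in\cP(\{-1,0,1\}):p(1)\ge 1-\dplus^{-10}\}
\]
of skewed distributions, equipped with the total variation metric $d_{\mathrm{TV}}$. Since $p\mapsto p(1)$ is a continuous linear function on the $2$-simplex, $\cS$ is a closed (hence complete) convex subset on which $\cT$ acts continuously. It therefore suffices to establish (i) $\cT(\cS)\subseteq\cS$ and (ii) $\cT$ is a strict contraction on $\cS$; both assertions of the proposition then follow from Banach, using $(0,0,1)\in\cS$ for the convergence statement.

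For invariance, fix $p\in\cS$ with $\eps_p:=1-p(1)\le\dplus^{-10}$. Write $Z_p=(\gplus-\gminus)-D_p$ where $D_p$ is the net defect due to those $i$ with $\eta_{p,i}\ne 1$. A single such $i\le\gplus$ decreases $Z_p$ by at most $2$, so $\{Z_p\le 1\}\subseteq\{K_+\ge(\gplus-\gminus-1)/2\}$ for $K_+:=|\{i\le\gplus:\eta_{p,i}\ne 1\}|$. Standard Skellam-type concentration yields $\pr(\gplus-\gminus<(\dplus-\dminus)/2)\le\dplus^{-\Omega(c^2)}$, using the hypothesis $\dplus-\dminus\ge c\sqrt{\dplus\ln\dplus}$. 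On the complementary event the requirement $K_+=\Omega(\sqrt{\dplus\ln\dplus})$ is extremely unlikely because $K_+$ is stochastically dominated by $\Bin(2\dplus,\eps_p)$, whose mean is at most $2\dplus^{-9}$; a Chernoff bound gives $\exp(-\omega(\log\dplus))$. Combining, $\pr(Z_p\le 1)\le\dplus^{-10}$, so $\cT(p)\in\cS$; in fact $\cT(p)(1)\ge 1-\dplus^{-\Omega(c^2)}$.

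For the contraction, take $p,q\in\cS$ with $\delta:=d_{\mathrm{TV}}(p,q)$, and couple the pairs $(\eta_{p,i},\eta_{q,i})_{i\ge 1}$ independently by the maximal coupling, so $\pr(\eta_{p,i}\ne\eta_{q,i})=\delta$; take $\gpm$ independent of everything else. Since $\psi$ takes integer values on integers and changes only across the thresholds $-1,0,1$, while $|Z_p-Z_q|\le 2N$ for $N:=\#\{i\le\gplus+\gminus:\eta_{p,i}\ne\eta_{q,i}\}$, one has $\{\psi(Z_p)\ne\psi(Z_q)\}\subseteq\{N\ge 1,\ |Z_p|\le 2N\}$. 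For a single discrepant index $i$, the background $W_i:=Z_p-\mathrm{sign}(i)\eta_{p,i}$ is independent (given $\gpm$) of the discrepancy at $i$, and by the argument of the invariance step $\pr(|W_i|\le 2)\le\dplus^{-\Omega(c^2)}$. A union bound over $i$ contributes at most $(\dplus+\dminus)\cdot\delta\cdot\dplus^{-\Omega(c^2)}$ to $\pr(\psi(Z_p)\ne\psi(Z_q))$, while the events $\{N=k\}$ for $k\ge 2$ contribute at most $((\dplus+\dminus)\delta)^k/k!$ times an analogous background factor, each strictly higher-order in $\delta$. Summing,
\[
d_{\mathrm{TV}}(\cT(p),\cT(q))\le\pr(\psi(Z_p)\ne\psi(Z_q))\le\alpha\cdot\delta
\]
with $\alpha=O(\dplus^{1-\Omega(c^2)})<1$ for $c$ and $\dplus$ sufficiently large, as assumed in the paper.

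Banach's theorem then yields a unique fixed point $p^*\in\cS$ and $\cT^t(p_0)\to p^*$ for every $p_0\in\cS$; applying this with $p_0=(0,0,1)$ finishes the proof. The main obstacle is the contraction step: one must make precise the heuristic that $\psi(Z_p)\ne\psi(Z_q)$ happens only when a discrepancy at some index coexists with the rare ``background near zero'' event, and must separate single-discrepancy from multi-discrepancy contributions so that only the former drive the contraction factor while the latter are of strictly higher order in $\delta$.
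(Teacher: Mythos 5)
Your proposal is correct and follows the same skeleton as the paper's proof: show that $\cT$ maps the skewed set into itself, show that $\cT$ is a strict contraction there, and invoke the Banach fixed point theorem with the observation that $(0,0,1)$ is skewed. The difference is in the execution of the contraction step. The paper works with the $L_1$-Wasserstein metric $\ell_1$ and an optimal per-coordinate coupling, and splits $\Erw|\psi(Z_p)-\psi(Z_q)|$ according to the events $\mathfrak A_1$ (almost all coordinates agree and equal $1$), $\mathfrak A_2$ ($\gamma\geq 2\dplus$) and $\mathfrak A_3$ ($\gplus-\gminus\leq 20$), with a somewhat delicate conditional-expectation estimate on $\overline{\mathfrak A_1\cup\mathfrak A_2\cup\mathfrak A_3}$. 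You instead use the total variation metric (equivalent to $\ell_1$ up to a factor $2$ on $\{-1,0,1\}$, so nothing is lost), the maximal per-coordinate coupling, and the inclusion $\{\psi(Z_p)\neq\psi(Z_q)\}\subseteq\{N\geq 1,\ |Z_p|\leq 2N\}$, separating the single-discrepancy term (discrepancy probability $\delta$ times the rare ``background near zero'' event, which is handled exactly as in the invariance step) from the $N\geq k\geq2$ terms. This is arguably more transparent than the paper's event decomposition, and it isolates cleanly where the factor $\delta$ and the factor $\dplus^{-\Omega(c^2)}$ come from. Three small points should be made explicit to tighten it: (i) the relevant background event is $|W_i|\leq 3$ rather than $\leq 2$, and the stochastic domination of $K_+$ by $\Bin(2\dplus,\eps_p)$ requires first restricting to $\gplus\leq2\dplus$ (the complement costs only $e^{-\Omega(\dplus)}$); (ii) in the union bound over $i$ the number of summands is the random $\gamma$, so split on $\gamma\leq2\dplus$ versus $\gamma>2\dplus$ (the paper's $\mathfrak A_2$ step), noting that the factor $\delta$ survives on the large-$\gamma$ part; (iii) ``strictly higher order in $\delta$'' is not by itself enough for a uniform contraction constant, but it becomes so immediately because on the skewed set one automatically has $\delta\leq\frac32\dplus^{-10}$, so $\sum_{k\geq2}((\dplus+\dminus)\delta)^k/k!=O(\dplus^{-8})\,\delta$ even without extracting any background factor. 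With these routine adjustments your contraction constant is, as you claim, $O(\dplus^{1-\Omega(c^2)})+O(\dplus^{-1})<1$ for $c$ and $\dplus$ large, and the conclusion follows as stated.
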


\begin{proof}[Proof of Theorem \ref{Thm_main}]
Consider the random variables
	\begin{align*}
	X_n&:= \frac 1n \bis(\G),&
	Y_n^{(t)}&:=\frac12\frac1n\sum_{v\in[n]}\sum_{w\in\partial_{\G} v}\vecone{\cbc{\mu_{w\to v}(t|\G,\SIGMA)=-\tilde{\psi}\br{\mu_v(t|\G,\SIGMA)}}}.
	\end{align*}
Then \Lem~\ref{Lemma_core} and \Prop~\ref{Prop_WP} imply that for any $\eps>0$,
	\begin{align}\label{eq_main1}
	\lim_{t\to\infty}\lim_{n\to\infty}\pr\brk{|X_n-Y_n^{(t)}|>\eps}&=0.
	\end{align}
By Definition~(\ref{eqWP}), $\mu_{w\to v}(t|\G,\SIGMA)$ and $\mu_v(t|\G,\SIGMA)$ are determined by $\partial^t_{\G}v$ and the initialisation 
$\mu_{u\to w}(0|\G,\SIGMA)$ for all $u,w\in\partial^t_{\G}v$, $\{u,w\}\in E(\G).$ Since (\ref{eq_init_boundary}) 
and (\ref{eq_root_tree}) match the recursive definition~(\ref{eqWP})
of $\mu_{w\to v}(t|\G,\SIGMA)$ and $\mu_v(t|\G,\SIGMA)$, 
Lemma~\ref{Lem_local} implies that for any fixed $t>0$ (as $n$ tends to infinity),
	\begin{align}\label{eq_main3}
	Y_n^{(t)}\quad&\stacksign{$n\to\infty$}{\to}\quad x^{(t)}:=\frac 12 \Erw\brk{\sum_{w\in\partial_{\T}r_{\T}}\vecone{\cbc{\mu_{w\toboss}(t|\T,r_{\T},\TAU)=
			 - \psi(\mu_{r_{\T}}(t|\T,r_{\T},\TAU))}}}
		\quad\mbox{in probability.}
	\end{align}
Now let $p^*$ denote the unique skewed fixed point of $\cT$ guaranteed by \Prop~\ref{Prop_fix}.
Since each child of $r_{\T}$ can be considered a root of an independent instance of $\T$ to which we can apply 
Lemma~\ref{Prop_swapped}, we obtain that given $(\TAU(w))_{w\in\partial r_{\T}}$ the sequence $(\mu_{w\toboss}(t|\T,r_{\T},\TAU))_{w\in\partial r_{\T}}$ converges to a sequence of independent random variables
$(\eta_w)_{w\in\partial r_{\T}}$
with distribution $p^*$ (if $\TAU(w)=1$) and $\bar p^*$ (if $\TAU(w)=-1$). By definition
$\mu_{r_{\T}}(t|\T,r_{\T},\TAU)$ converges to 
$\sum_{w\in\partial r_{\T},\TAU(w)=1}\eta_w+\sum_{w\in\partial r_{\T},\TAU (w)=-1}\eta_w$. 
Considering the offspring distributions of $r_{\T}$ in both cases, i.e. $\TAU(r_{\T})=\pm1$,
we obtain from $\varphi_{\dplus ,\dminus }(p)=\varphi_{\dplus ,\dminus }(\bar p)$ for all $p\in\cP (\{-1,0,1\})$ that
\begin{align}\label{eq_main4}
	\lim_{t\to\infty}x^{\bc t}&=\varphi_{\dplus ,\dminus }(p^*).
	\end{align}
Finally, combining (\ref{eq_main1})--(\ref{eq_main4}) completes the proof.
\end{proof}

\section{Proof of \Prop~\ref{Prop_WP}}\label{Sec_Prop_WP}

\begin{lemma}\label{core_fixed}
If $v\in\core$ and $w\in\partial_{\G} v$, then $\mu_{v\to w}(t|\G,\SIGMA)=\SIGMA(v)=\mu_{v\to w}(t|\G,\SIGMA_{\cC})$ for all $t\geq 0$.
\end{lemma}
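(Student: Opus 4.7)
The plan is to proceed by a simultaneous induction on $t$ for both initialisations $\SIGMA$ and $\SIGMA_{\cC}$. The base case $t=0$ is immediate from~(\ref{eqWP}): if $v\in\cC$ then $v$ lies in the domain of both $\SIGMA$ and $\SIGMA_{\cC}$, so $\mu_{v\to w}(0|\G,\SIGMA)=\SIGMA(v)=\SIGMA_{\cC}(v)=\mu_{v\to w}(0|\G,\SIGMA_{\cC})$.

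For the inductive step, suppose the claim holds at time $t$ for every $v\in\cC$ and every $w\in\partial_{\G}v$. Fix $v\in\cC$, $w\in\partial_{\G}v$, and consider the sum
\[
S\;=\;\sum_{u\in\partial_{\G}v\setminus\{w\}}\mu_{u\to v}(t|\G,\SIGMA).
\]
Split the neighbourhood into the core part $\partial_{\G}v\cap\cC$ and the non-core part $\partial_{\G}v\setminus\cC$. By the induction hypothesis, every $u\in(\partial_{\G}v\cap\cC)\setminus\{w\}$ contributes exactly $\SIGMA(u)$, so writing $T_{\pm}$ for the number of such core neighbours with $\SIGMA(u)\SIGMA(v)=\pm 1$, the core contribution is $\SIGMA(v)(T_+-T_-)$. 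By~(\ref{eqCoreProperty}) we have $|\partial_{\pm}v|\ge d_{\pm}-\tfrac{c}{4}\sqrt{d_+\ln d_+}$ from below and $|\partial_{\pm}v|\le d_{\pm}+\tfrac{c}{4}\sqrt{d_+\ln d_+}$ from above, while $|\partial_{\G}v\setminus\cC|\le 100$. Hence $T_+\ge |\partial_+v|-101$ and $T_-\le|\partial_-v|$, which gives
\[
T_+-T_-\;\ge\;d_+-d_-\;-\;\tfrac{c}{2}\sqrt{d_+\ln d_+}\;-\;101\;\ge\;\tfrac{c}{2}\sqrt{d_+\ln d_+}-101
\]
by the standing assumption $d_+-d_-\ge c\sqrt{d_+\ln d_+}$. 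The non-core neighbours contribute at most $|\partial_{\G}v\setminus\cC|\le 100$ in absolute value, since each WP message lies in $\{-1,0,1\}$. Therefore
\[
\SIGMA(v)\cdot S\;\ge\;(T_+-T_-)-100\;\ge\;\tfrac{c}{2}\sqrt{d_+\ln d_+}-201,
\]
which is at least $1$ once $c$ and $d_+$ are sufficiently large (as we may assume). Consequently $\psi(S)=\SIGMA(v)$, proving $\mu_{v\to w}(t+1|\G,\SIGMA)=\SIGMA(v)$.

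The argument for $\SIGMA_{\cC}$ is identical: the inductive hypothesis supplies the same value $\SIGMA(u)$ for every core neighbour $u$, and the only change is that non-core neighbours start from $0$ rather than from $\SIGMA(u)$; but those messages still take values in $\{-1,0,1\}$ and there are at most $100$ of them, so the same error bound applies. The only non-trivial point is the bookkeeping in the last display, namely that the ``frozen'' majority contributed by core neighbours dominates the combined error coming from the at most $100$ non-core neighbours and from excluding $w$; this is precisely where the hypothesis $d_+-d_-\ge c\sqrt{d_+\ln d_+}$ (with $c$ chosen large) is used.
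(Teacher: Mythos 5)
Your proof is correct and follows essentially the same route as the paper: induction on $t$, splitting the incoming messages at a core vertex into same-sign core, opposite-sign core and non-core contributions, and using~(\ref{eqCoreProperty}) together with $d_+-d_-\ge c\sqrt{d_+\ln d_+}$ (and $c$, $d_+$ large) to force the sum past the threshold $1$ in the direction of $\SIGMA(v)$, simultaneously for both initialisations. If anything, your bookkeeping of the $\pm 101$ error terms is slightly more explicit than the paper's.
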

\begin{proof}
We proceed by induction on $t$.
For $t=0$ the assertion is immediate from the initialisation of the messages.
To go from $t$ to $t+1$, consider $v\in\core$ and $w\in\partial_{\G}v$.
We may assume without loss of generality that $\SIGMA(v)=1$.
By the definition of the WP message,
	\begin{align}\label{eqcore_fixed1}
	\mu_{v\to w}(t+1|\G,\SIGMA)&=\psi\bc{\sum_{u\in\partial_{\G}v\setminus\cbc w}\mu_{u\to v}(t|\G,\SIGMA)}
		=\psi\bc{\splus+\sminus+S_0}
	\end{align}
where
	\begin{align*}
	\splus&:=\sum_{u\in\cC\cap\SIGMA^{-1}(+1)\cap\partial_{\G}v\setminus\cbc w}\mu_{u\to v}(t|\G,\SIGMA),\nonumber\\
	\sminus&:=\sum_{u\in\cC\cap\SIGMA^{-1}(-1)\cap\partial_{\G}v\setminus\cbc w}\mu_{u\to v}(t|\G,\SIGMA),\nonumber\\
	S_0&:=\sum_{u\in\partial_{\G}v\setminus(\cC\cup\cbc w)}\mu_{u\to v}(t|\G,\SIGMA).\nonumber
	\end{align*}
Now, (\ref{eqCoreProperty}) ensures that
	\begin{align}\label{eqcore_fixed2}
	\splus&\geq \dplus -\frac c4\sqrt{\dplus \ln \dplus },&
		\sminus&\geq -\dminus -\frac c4\sqrt{\dplus \ln \dplus },&
		|S_{0}|&\leq100\leq\frac c4\sqrt{\dplus \ln \dplus },
	\end{align}
provided that the constant $c>0$ is chosen large enough.
Combining (\ref{eqcore_fixed1}) and (\ref{eqcore_fixed2}), we see that $S_{+}+S_{-}+S_0\geq1$ and thus $\mu_{v\to w}(t+1|\G,\SIGMA)=1$.
The exact same argument works for $\mu_{v\to w}(t+1|\G,\SIGMA_{\cC})=1$.
\end{proof}

Let $\G_v$ denote the subgraph of $\G$ induced on $\cC_v$.
To prove \Prop~\ref{Prop_WP}, fix $s>0$ large enough.
Let $\cS=\cS(s)$ be the set of all vertices such that either $|\cC_v|>\sqrt s$ or $\G_v$ is cyclic.
Then \Lem~\ref{Lemma_core} (with slightly smaller $\eps$) and Lemma~\ref{Lem_local} imply that $|\cS|\leq\eps n$ \whp\
For the rest of this section, let $v\not\in\cS$ be fixed.

For $w\in\cC_v\setminus\cbc v$ we let $w_{\uparrow v}$ be the neighbour of $w$ on the path from $w$ to $v$.
We define $\G_{w\to v}$ as the component of $w$ in the graph obtained from $\G_v$ by removing the edge $\{w,w_{\uparrow v}\}$.
The vertex set of $\G_{w\to v}$ will be denoted by $\cC_{w\to v}$.
Further, $h_{w\to v}$ is the maximum distance between $w$ and any other vertex in $\G_{w\to v}$.
Additionally, $h_v$ is the maximum distance between $v$ and any other vertex in~$\G_v$.
Finally, let $\SIGMA_v:\cC_v\to\{\pm1\}$, $w\mapsto\SIGMA(w)$ and let $\SIGMA_{\cC,v}:\cC_v \cap \cC\to\{\pm1\}$, $w\mapsto\SIGMA_{\cC}(w)$.

\begin{lemma}\label{lem_fixed_m}
\begin{enumerate}
\item For any $w\in \cC_v\setminus\cbc v$ and any $t> h_{w\to v}$ we have\\
     $\mu_{w\to w_{\uparrow v}}(t|\G,\SIGMA)=\mu_{w\to w_{\uparrow v}}(h_{w\to v}+1|\G,\SIGMA)
		=\mu_{w\to w_{\uparrow v}}(t|\G,\SIGMA_{\cC}).$
\item For any $t\geq h_v$ we have 
      $\mu_{v}(t|\G,\SIGMA)=\mu_{v}(h_v+1|\G,\SIGMA)=\mu_{v}(t|\G,\SIGMA_{\cC})$.
      
\end{enumerate}
\end{lemma}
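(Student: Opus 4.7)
The plan is to prove part~(1) by induction on $h_{w \to v}$ and then deduce part~(2) by summing the stabilised messages over the neighbours of $v$. Three structural facts drive the argument. First, because $v \notin \cS$, the graph $\G_v$ is acyclic, so $\G_{w \to v}$ is a well-defined subtree rooted at $w$ and $w_{\uparrow v}$ is uniquely specified. Second, from the construction \textbf{C1}--\textbf{C2}, any non-core $u \in \cC_v \setminus \cC$ satisfies $\partial_\G u \subseteq \cC_v$, so the WP recursion at $u$ reads only messages originating inside $\G_v$. Third, Lemma~\ref{core_fixed} guarantees that every core vertex $u \in \cC$ sends the frozen message $\mu_{u \to u'}(t|\G, \SIGMA) = \SIGMA(u) = \mu_{u \to u'}(t|\G, \SIGMA_\cC)$ to any neighbour $u'$ at every time $t \geq 0$, providing time-independent boundary data for the induction.

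In the base case $h_{w \to v} = 0$ the subtree $\G_{w \to v}$ equals $\{w\}$. If $w \in \cC$, the three equalities follow immediately from Lemma~\ref{core_fixed}. If $w \notin \cC$, the second structural fact forces $w$ to have $w_{\uparrow v}$ as its only $\G$-neighbour, and then $\mu_{w \to w_{\uparrow v}}(t|\G, \SIGMA) = \psi(0) = 0$ for all $t \geq 1$, with the same computation under $\SIGMA_\cC$. In the inductive step, the case $w \in \cC$ is again resolved by Lemma~\ref{core_fixed}. If $w \notin \cC$, the remaining neighbours $u_1, \ldots, u_k$ of $w$ are its children in $\G_v$ and satisfy $h_{u_i \to v} \leq h_{w \to v} - 1$. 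For $t > h_{w \to v}$ one has $t - 1 > h_{u_i \to v}$, so the inductive hypothesis gives
\[
\mu_{u_i \to w}(t-1|\G, \SIGMA) \;=\; \mu_{u_i \to w}(h_{u_i \to v}+1|\G, \SIGMA) \;=\; \mu_{u_i \to w}(t-1|\G, \SIGMA_\cC)
\]
for every $i$. Substituting into~\eqref{eqWP} and applying $\psi$ proves the three desired equalities at $w$, with the stabilised value already attained at $t = h_{w \to v} + 1$.

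Part~(2) then drops out almost for free: every neighbour $u$ of $v$ lies in $\cC_v^{(0)} \subseteq \cC_v$ with $h_{u \to v} \leq h_v - 1$, so for $t \geq h_v$ part~(1) applies to each summand of $\mu_v(t|\G, \SIGMA) = \sum_{u \in \partial_\G v} \mu_{u \to v}(t|\G, \SIGMA)$, making all of them simultaneously stable and equal to their $\SIGMA_\cC$ counterparts. I expect the only real point of care to be the second structural fact — that non-core vertices in $\cC_v$ keep all their neighbours inside $\cC_v$ — since otherwise the recursion could spill outside the controlled subtree. Once that observation is in place, Lemma~\ref{core_fixed} supplies the Dirichlet-style boundary condition on $\cC$, and the induction unfolds without further obstacles.
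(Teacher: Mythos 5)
Your proposal is correct and follows essentially the same route as the paper: induction on $h_{w\to v}$ with \Lem~\ref{core_fixed} supplying the frozen messages at core vertices and the construction {\bf C1--C2} guaranteeing that non-core vertices of $\cC_v$ have all neighbours inside $\cC_v$, after which part~(2) follows by applying part~(1) to the messages entering $v$. The only difference is presentational: you split the inductive step explicitly into the cases $w\in\cC$ and $w\notin\cC$, which the paper leaves implicit.
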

\begin{proof}
The proof of (1) proceeds by induction on $h_{w\to v}$. The construction {\bf C1--C2} of $\cC_v$ ensures that any $w\in\cC_v$ with $h_{w\to v}=0$ either belongs
to $\cC$ or has no neighbour besides $w_{\toboss v}$. Hence for the first case the assumption follows from Lemma~\ref{core_fixed}. If 
$\partial_{\G}w\setminus\{w_{\toboss v}\}=\emptyset$ we obtain that
$
 \mu_{w\to w_{\toboss v}}(t|\G,\SIGMA)=\mu_{w\to w_{\toboss v}}(t|\G,\SIGMA_{\cC})=0
$
for all $t\geq 1$ by the definition of the WP messages. Now, assume that $h_{w\to v}>0$ and let $t>h_{w\to v}$. 
Then all neighbours $u\neq w_{\toboss v}$ of $w$ in $\G_{w\to v}$ satisfy $h_{u\to v}<h_{w\to v}$.
Thus, by induction 
\begin{align*}
 \mu_{w\to w_{\toboss v}}(t|\G,\SIGMA) &=\psi\br{\sum_{u\in\partial_{\G}w\setminus\{w_{\toboss v}\}}\mu_{u\to w}(t-1|\G,\SIGMA)}\\
&=\psi\br{\sum_{u\in\partial_{\G}w\setminus\{w_{\toboss v}\}}\mu_{u\to w}(h_{u\to v}+1|\G,\SIGMA)}
=\mu_{w\to w_{\toboss v}}(h_{w\to v}+1|\G,\SIGMA).
\end{align*}
An analogous argument applies to $\mu_{w\to w_{\uparrow v}}(t|\G,\SIGMA_{\cC})$.
The proof of (2) is similar.
\end{proof}

For each vertex $w\in\cC_v$, $w\neq v$, let $\mu_{w\to v}^*=\mu_{w\to w_{\toboss v}}(s|\G,\SIGMA)$.
Further, let $\mu_w^*=\mu_w(s|\G,\SIGMA)$. 
In addition, for $z\in \{\pm 1\}$ let
	$$\SIGMA^z_{w\to v}:\cC_{w\to v}\cap(\cbc w\cup\core)\to\cbc{\pm1},\qquad
		u\mapsto\begin{cases}
			z&\mbox{ if }u=w,\\
			\SIGMA(u)&\mbox{ otherwise.}
			\end{cases}$$
In words, $\SIGMA^z_{w\to v}$ freezes $w$ to $z$ and all other $u\in\cC_{w\to v}$ that belong to the core to $\SIGMA(u)$.
Analogously, let
	$$\SIGMA^z_{v}:\cC_{v}\cap(\cbc v\cup\core)\to\cbc{\pm1},\qquad
		u\mapsto\begin{cases}
			z&\mbox{ if }u=v,\\
			\SIGMA(u)&\mbox{ otherwise.}
			\end{cases}$$

\begin{lemma}\label{Lem_dyn}
Suppose that
$u\in\cC_v\setminus\cbc v$, such that $h_{u\to v}\geq 1$.
\begin{enumerate}
\item If $z=\mu_{u\to v}^*\in\cbc{-1,1}$, then
	\begin{equation}\label{eq_cut_1}\cut(\G_{u\to v},\SIGMA^z_{u\to v})<\cut(\G_{u\to v},\SIGMA^{-z}_{u\to v}).\end{equation}
	Similarly, if $z=\psi(\mu_{v}^*)\in\cbc{-1,1}$, then
	\begin{equation}\label{eq_cut_1a}\cut(\G_{v},\SIGMA^z_{v})<\cut(\G_{v},\SIGMA^{-z}_{v}).\end{equation}
\item If $\mu_{u\to v}^*=0$, then 
	\begin{equation}\label{eq_cut_2}\cut(\G_{u\to v},\SIGMA^{+1}_{u\to v})=\cut(\G_{u\to v},\SIGMA^{-1}_{u\to v}).\end{equation}
	Similarly, if $\mu_{v}^*=0$, then
	\begin{equation}\label{eq_cut_2a}\cut(\G_{v},\SIGMA^{+1}_{v})=\cut(\G_{v},\SIGMA^{-1}_{v}).\end{equation}
\end{enumerate}
\end{lemma}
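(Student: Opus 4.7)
The plan is to prove, by induction on $h_{u\to v}$, the stronger identity
\[
\cut(\G_{u\to v},\SIGMA^{-1}_{u\to v}) - \cut(\G_{u\to v},\SIGMA^{+1}_{u\to v}) \;=\; \sum_{w\in\partial_{\G_{u\to v}}u}\mu^*_{w\to v}.
\]
Once this is in place, (\ref{eq_cut_1}) and (\ref{eq_cut_2}) follow directly, because the WP update gives $\mu^*_{u\to v}=\psi\br{\sum_{w}\mu^*_{w\to v}}$ (the sum running over the same set $\partial_{\G_{u\to v}}u$ thanks to C2), so $\mu^*_{u\to v}=\pm 1$ happens exactly when the integer sum on the right-hand side has absolute value at least $1$ and the appropriate sign, while $\mu^*_{u\to v}=0$ happens exactly when that sum, hence the cut difference, vanishes. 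Two preliminary observations organise the argument. First, since $v\not\in\cS$, the graph $\G_v$ is a tree and each $\G_{u\to v}$ is a subtree rooted at $u$. Second, C2 halts the expansion defining $\cC_v$ at every vertex of $\core$, so in the tree $\G_v$ no core vertex can have a proper descendant; therefore the hypothesis $h_{u\to v}\geq 1$ implicitly forces $u\not\in\core$, which is exactly what allows the WP recursion to drive the induction.

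For the base case $h_{u\to v}=0$ with $u\not\in\core$, the subgraph $\G_{u\to v}$ consists of $u$ alone, so both cut values vanish and the identity holds trivially. In the inductive step ($h_{u\to v}\geq 1$, $u\notin\core$), the tree structure yields the standard dynamic-programming formula
\[
\cut(\G_{u\to v},\SIGMA^z_{u\to v}) \;=\; \sum_{w\in\partial_{\G_{u\to v}}u} F_w(z),
\]
where $F_w(z)=\vecone\cbc{z\neq\SIGMA(w)}$ when $w\in\core$ (so that $\cC_{w\to v}=\cbc w$) and $F_w(z)=\min_{z'\in\cbc{\pm 1}}\br{\cut(\G_{w\to v},\SIGMA^{z'}_{w\to v})+\vecone\cbc{z\neq z'}}$ otherwise. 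A short case analysis on $A=\cut(\G_{w\to v},\SIGMA^{+1}_{w\to v})$ and $B=\cut(\G_{w\to v},\SIGMA^{-1}_{w\to v})$ shows that $F_w(-1)-F_w(+1)=\min(A+1,B)-\min(A,B+1)\in\cbc{-1,0,+1}$; combining this with the induction hypothesis for non-core $w$ (respectively with Lemma~\ref{core_fixed}, which gives $\mu^*_{w\to v}=\SIGMA(w)$ for $w\in\core$) yields $F_w(-1)-F_w(+1)=\mu^*_{w\to v}$ in every case. Summing over $w$ closes the induction.

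The root assertions (\ref{eq_cut_1a}) and (\ref{eq_cut_2a}) are proved by the same argument: $\partial_{\G} v\subseteq\cC_v$ by~C1, so the same DP yields $\cut(\G_v,\SIGMA^{-1}_v)-\cut(\G_v,\SIGMA^{+1}_v)=\sum_{w\in\partial_{\G}v}\mu^*_{w\to v}=\mu^*_v$, and (\ref{eq_cut_1a})--(\ref{eq_cut_2a}) follow upon inspecting the sign of $\psi(\mu^*_v)$. The main obstacle I anticipate is not any individual step but the careful bookkeeping in the inductive step: one has to verify for every combination of $A$ and $B$ that $\min(A+1,B)-\min(A,B+1)$ lands on the right element of $\cbc{-1,0,+1}$ and agrees with the value of $\mu^*_{w\to v}$ provided by the induction hypothesis. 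This relies crucially on the acyclicity of $\G_v$, which decouples the DP over the child subtrees $\G_{w\to v}$, and on the fact, flagged in the first paragraph, that every core child is automatically handled by Lemma~\ref{core_fixed}.
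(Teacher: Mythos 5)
Your proof is correct and takes essentially the same route as the paper's: an induction on $h_{u\to v}$ that uses the tree structure of $\G_v$ (guaranteed by $v\notin\cS$) to express the cut difference as the sum of the stabilised incoming WP messages from the child subtrees. The only cosmetic difference is that you state the inducted identity in signed form with the explicit $\min(A+1,B)-\min(A,B+1)$ computation (making the $\psi$-truncation visible), whereas the paper records it in absolute-value form; the substance is the same.
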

\begin{proof}
We prove (\ref{eq_cut_1}) and (\ref{eq_cut_2}) by induction on $h_{u\to v}$. If $h_{u\to v}=1$
then we have that all neighbours $w\in\partial_{\cC_{u\to v}}u$ of $u$ with $\mu^*_{u\to v}\neq 0$ are in $\cC$, i.e. fixed under $\SIGMA_{u\to v}^z$. Since 
$\cC_{u\to v}=\partial_{\G}u\setminus\{u_{\toboss v}\}\cup\{u\}$, we obtain
	\begin{align}\label{opt_1}
	\cut(\cC_{u\to v},\SIGMA^{-z}_{u\to v})-\cut(\cC_{u\to v},\SIGMA^z_{u\to v})&=
		\abs{\sum_{w\in\partial_{\G}u\setminus\{u_{\toboss v}\}}\mu_{w\to v}^*}
	\end{align}
by definition of $z$. By the induction hypothesis and because $\G_{u\to v}$ is a tree (as $v\not\in\cS$) we have that (\ref{opt_1}) holds for $h_{u\to v}>1$ as well.
A similar argument yields (\ref{eq_cut_1a}) and (\ref{eq_cut_2a}).
\end{proof}

Now, let $\cU_v$ be the set of all $w\in\cC_v$ such that $\mu_{w\to v}^*\neq 0$.
Furthermore, let
	$$\SIGMA_{\toboss v}:\cU_v\cup\{v\}\to\cbc{-1,+1},
		\qquad w\mapsto\begin{cases}
			\tilde\psi(\mu_v^*)&\mbox{ if }w=v,\\
			\mu_{w\to v}^*&\mbox{ otherwise}.
			\end{cases}$$
Thus, $\SIGMA_{\toboss v}$ sets all  $w\in\cC_v\cap\core\setminus\{v\}$ to their planted sign and all $w\in\cU_v\setminus\core$ to $\mu_{w\to v}^*$.
Moreover, $\SIGMA_{\toboss v}$ sets $v$ to $\psi(\mu_v^*)$ if $\psi(\mu_v^*)\neq 0$ and to $1$ if there is a tie.

\begin{corollary}\label{Cor_dyn}
We have $\cut(\G_v,\SIGMA_{\cC})=\cut(\G_v,\SIGMA_{\toboss v})$.

\end{corollary}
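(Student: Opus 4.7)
The plan is to prove both inequalities $\cut(\G_v,\SIGMA_\cC)\le\cut(\G_v,\SIGMA_{\toboss v})$ and $\cut(\G_v,\SIGMA_{\toboss v})\le\cut(\G_v,\SIGMA_\cC)$, exploiting that $\G_v$ is a tree (since $v\notin\cS$) together with the strict-improvement statements from Lemma~\ref{Lem_dyn}.

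For the first inequality I would argue that $\SIGMA_{\toboss v}$ extends the restriction of $\SIGMA_\cC$ to $\cC_v$. Every $u\in\core\cap\cC_v\setminus\{v\}$ lies in $\cU_v$ because Lemma~\ref{core_fixed} yields $\mu^*_{u\to v}=\SIGMA(u)\neq 0$, and then $\SIGMA_{\toboss v}(u)=\mu^*_{u\to v}=\SIGMA(u)=\SIGMA_\cC(u)$. For the root, if $v\in\core$ with, say, $\SIGMA(v)=+1$, then a short calculation along the lines of the proof of Lemma~\ref{core_fixed}---bounding the messages from $\core$-neighbours via Lemma~\ref{core_fixed} and the at most $100$ non-core neighbours via the core property~(\ref{eqCoreProperty})---yields $\mu_v^*\geq \dplus-\dminus-\tfrac{c}{2}\sqrt{\dplus\ln\dplus}-200\geq 1$, whence $\tilde\psi(\mu_v^*)=+1=\SIGMA(v)$. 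Since every $\tau$ satisfying $\SIGMA_{\toboss v}$ then automatically satisfies $\SIGMA_\cC$, adding the extra constraints of $\SIGMA_{\toboss v}$ can only increase the minimum cut.

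For the converse I would begin with an optimal $\tau^*$ for $\cut(\G_v,\SIGMA_\cC)$ chosen so that $\tau^*(v)=\tilde\psi(\mu_v^*)$: if $\mu_v^*\neq 0$ then (\ref{eq_cut_1a}) automatically forces $\tau^*(v)=\psi(\mu_v^*)=\tilde\psi(\mu_v^*)$ in any optimal assignment, while if $\mu_v^*=0$ then (\ref{eq_cut_2a}) lets us select an optimal assignment for $\SIGMA_v^{+1}$ at no cost. I then process the vertices $w\in\cU_v\setminus\core$ in decreasing order of the subtree height $h_{w\to v}$. At each step, if $\tau^*(w)\neq\mu^*_{w\to v}$, I replace the restriction of $\tau^*$ to $\cC_{w\to v}$ by an optimal assignment for $\cut(\G_{w\to v},\SIGMA^{\mu^*_{w\to v}}_{w\to v})$. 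By Lemma~\ref{Lem_dyn}(1) the cut inside $\G_{w\to v}$ strictly decreases by at least one, while the cut contribution of the edge $\{w,w_{\toboss v}\}$ changes by at most one, so the total cost is nonincreasing; optimality of $\tau^*$ then forces the change to be exactly zero, so the modified $\tau^*$ is still optimal.

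The main technical obstacle will be ensuring that these iterative modifications do not interfere with each other. The decreasing-$h_{w\to v}$ order resolves this cleanly: descendants of $w$ in the rooted tree $\G_v$ have strictly smaller $h$-values and so have not yet been processed, while $v$ and all vertices in unrelated branches or above $w$ (including previously processed ancestors of $w$) lie outside $\cC_{w\to v}$ and are thus untouched when $\cC_{w\to v}$ is modified. Non-core leaves of $\G_v$ need no attention either: a non-core vertex $w$ with no neighbour in $\G_v$ besides $w_{\toboss v}$ gets $\mu^*_{w\to v}=\psi(0)=0$, hence $w\notin\cU_v$. After all iterations $\tau^*$ satisfies $\SIGMA_{\toboss v}$ on $\cU_v\cup\{v\}$ and still has cost $\cut(\G_v,\SIGMA_\cC)$, which gives $\cut(\G_v,\SIGMA_{\toboss v})\leq\cut(\G_v,\SIGMA_\cC)$ and completes the proof.
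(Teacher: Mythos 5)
Your proof is correct and takes essentially the same route as the paper, which simply declares the corollary ``immediate from Lemma~\ref{Lem_dyn}'': your two inequalities --- that $\SIGMA_{\toboss v}$ extends the core constraints (via Lemma~\ref{core_fixed} and the core property for $v$ itself), and the bottom-up exchange argument on the tree $\G_v$ using the strict/equal comparisons \eqref{eq_cut_1}--\eqref{eq_cut_2a} --- are precisely the verification the paper leaves implicit. No gaps.
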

\begin{proof}
 This is immediate from Lemma \ref{Lem_dyn}. 
\end{proof}

Hence, in order to determine an optimal cut of $\G_v$ we merely need to figure out the assignment of the vertices in $\cC_v\setminus(\{v\}\cup\cU_v)$.
Suppose that $\SIGMA_{\toboss v}^*:\cC_v\to\{\pm1\}$ is an optimal extension of $\SIGMA_{\uparrow v}$ to a cut of $\G_v$, i.e.,
  $$
   \cut(\G_v,\SIGMA_{\toboss v})=\sum_{\{u,w\}\in E(\G_v)}\frac12(1-\SIGMA_{v\toboss}^*(u)\SIGMA_{v\toboss}^*(w)).
  $$

\begin{corollary}\label{Lem_first_level}
It holds that
	$\sum_{w\in\partial_{\G}v}\frac12(1-\SIGMA_{v\toboss}^*(v)\SIGMA_{v\toboss}^*(w))
		 =\sum_{w\in\partial_{\G} v}\vecone{\cbc{\mu_{w\to v}^*=-\tilde\psi\br{\mu_v}}}.$
\end{corollary}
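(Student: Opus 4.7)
The plan is to apply Corollary~\ref{Cor_dyn} to transport the computation onto $\SIGMA_{\toboss v}$, and then analyse the contribution of each edge at $v$ separately according to whether $\mu_{w\to v}^*$ vanishes. By Corollary~\ref{Cor_dyn}, $\cut(\G_v,\SIGMA_\cC)=\cut(\G_v,\SIGMA_{\toboss v})$, so any optimal extension $\SIGMA_{v\toboss}^*$ must coincide with $\SIGMA_{\toboss v}$ on $\{v\}\cup\cU_v$. In particular $\SIGMA_{v\toboss}^*(v)=\tilde\psi(\mu_v^*)$, and $\SIGMA_{v\toboss}^*(w)=\mu_{w\to v}^*\in\{\pm 1\}$ whenever $\mu_{w\to v}^*\ne 0$.

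For a neighbour $w$ of $v$ with $\mu_{w\to v}^*\in\{\pm1\}$, the edge $\{v,w\}$ contributes
\[
\tfrac12\bc{1-\SIGMA_{v\toboss}^*(v)\SIGMA_{v\toboss}^*(w)}=\tfrac12\bc{1-\tilde\psi(\mu_v^*)\,\mu_{w\to v}^*}=\vecone\cbc{\mu_{w\to v}^*=-\tilde\psi(\mu_v^*)}
\]
to the left-hand side, matching the corresponding summand on the right term by term. For $w$ with $\mu_{w\to v}^*=0$, the indicator on the right vanishes (since $-\tilde\psi(\mu_v^*)\in\{\pm1\}$), and I would argue that the edge is not cut on the left either. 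Indeed, since $v\notin\cS$ the graph $\G_v$ is a tree, so deleting $\{v,w\}$ separates the problem additively into the single edge $\{v,w\}$ plus the cut inside $\G_{w\to v}$ subject to the boundary value at $w$. Lemma~\ref{Lem_dyn}(2) asserts that this inner minimum cut is the same for either sign of $w$; hence if $\SIGMA_{v\toboss}^*(w)\ne\SIGMA_{v\toboss}^*(v)$, we could flip $\SIGMA_{v\toboss}^*$ on all of $\cC_{w\to v}$ and strictly decrease the total cut, contradicting optimality. Thus $\SIGMA_{v\toboss}^*(w)=\SIGMA_{v\toboss}^*(v)$ in this case, so the edge contributes $0$ to the left-hand sum as well.

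Summing the two cases over $w\in\partial_{\G}v$ yields the claimed identity. I do not anticipate any serious obstacle: the statement is essentially a case analysis, and the only point requiring care is the forced agreement $\SIGMA_{v\toboss}^*(w)=\SIGMA_{v\toboss}^*(v)$ for zero-message neighbours, which follows from Lemma~\ref{Lem_dyn}(2) combined with the acyclicity of $\G_v$ guaranteed by $v\notin\cS$.
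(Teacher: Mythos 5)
Your proposal follows essentially the same route as the paper's proof: edges to neighbours with $\mu^*_{w\to v}\neq0$ are handled purely by the definition of $\SIGMA_{\toboss v}$ (note that $\SIGMA^*_{v\toboss}$ agrees with $\SIGMA_{\toboss v}$ on $\cbc v\cup\cU_v$ simply because it is by definition an \emph{extension} of $\SIGMA_{\toboss v}$ -- \Cor~\ref{Cor_dyn} is not needed for this), and the zero-message neighbours are forced to align with $v$ via part (2) of \Lem~\ref{Lem_dyn}, which is exactly the paper's one-line argument. The one imprecise step is your justification of that alignment: flipping $\SIGMA^*_{v\toboss}$ on all of $\cC_{w\to v}$ is in general not an admissible move, since for a zero-message neighbour $w$ with $h_{w\to v}\geq1$ the set $\cC_{w\to v}$ typically contains vertices of $\cU_v$ (in particular core vertices), whose values are prescribed by $\SIGMA_{\toboss v}$; the flipped assignment is then no longer an extension of $\SIGMA_{\toboss v}$, so its smaller cut does not by itself contradict the optimality of $\SIGMA^*_{v\toboss}$. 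The repair is the one your additive (tree) decomposition already suggests: compare $\SIGMA^*_{v\toboss}$ with the assignment obtained by re-optimising $\G_{w\to v}$ \emph{subject to the frozen values} and with $w$ set to $\SIGMA^*_{v\toboss}(v)$. That this does not increase the inner cost uses, besides \Lem~\ref{Lem_dyn}(2) (which concerns $\cut(\G_{w\to v},\SIGMA^{\pm1}_{w\to v})$, where only $w$ and the core are frozen), also \Lem~\ref{Lem_dyn}(1) applied recursively, as in \Cor~\ref{Cor_dyn}, to see that additionally freezing the nonzero-message vertices inside $\cC_{w\to v}$ does not change these constrained minima. With this (easy) emendation your argument coincides in substance with the paper's proof, which invokes \Lem~\ref{Lem_dyn}(2) for precisely this alignment statement and leaves the same bookkeeping implicit.
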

\begin{proof}
Part (2) of Lemma \ref{Lem_dyn} implies that $\SIGMA_{v\toboss}^*(v)\SIGMA_{v\toboss}^*(w)=1$ for all $w\in\partial_{\G}v$ such that $\mu_{w\to v}^*=0$.
\end{proof}

\begin{proof}[Proof of \Prop~\ref{Prop_WP}]
Given $\eps>0$ choose $\delta=\delta(\eps,\dplus ,\dminus )$ sufficiently small and $s=s(\eps,\delta,\dplus ,\dminus )>0$ sufficiently large.
In particular, pick $s$ large enough so that 
	\begin{equation}\label{eqProp_WP1}
	\pr\br{|\cS|\geq\delta n}<\eps.
	\end{equation}
Provided that $\delta$ is suitable small, the Chernoff bound implies that  for  large $n$
\begin{align}\label{eqProp_WP2}
  \pr\br{\frac 12 \left.\sum_{v\in\cS}|\partial_{\G}v|\geq\eps n \right||\cS|<\delta n}
 	&<\eps.
\end{align}
Now, suppose that $\SIGMA^*_{\cC}$ is an optimal extension of $\SIGMA_{\cC}$ to a cut of $\G$ and let $v\not\in\cS$. Then 
using the definition of $\cC_v$, Corollary~\ref{Cor_dyn} implies
that
\begin{align*}
 \sum_{w\in\partial_{\G} v}(1-\SIGMA_{\cC}^*(v)\SIGMA_{\cC}^*(w))
 =\sum_{w\in\partial_{\G} v}(1-\SIGMA_{v\toboss}^*(v)\SIGMA_{v\toboss}^*(w)).
\end{align*}
Therefore, we obtain
\begin{align*}
 \pr\br{ \abs{\cut(\G,\SIGMA_{\cC})-\frac12\sum_{v\not\in\cS}
 \sum_{w\in\partial_{\G} v}(1-\SIGMA_{v\toboss}^*(v)\SIGMA_{v\toboss}^*(w))}
  \geq \eps n}
\leq\pr\br{\frac 12 \sum_{v\in\cS}|\partial_{\G}v|\geq\eps n}\leq 2\eps.
\end{align*}
The assertion follows from Lemma~\ref{lem_fixed_m} for $t\geq s$. 
\end{proof}

\section{Proof of \Prop~\ref{Prop_fix}}\label{Sec_fix}

\noindent
We continue to denote the set of probability measures on $\cX\subset\RR^k$ by $\cP(\cX)$.
For a $\cX$-valued random variable $X$ we denote by $\cL(X)\in\cP(\cX)$ the distribution of $X$.
Furthermore, if $p,q\in\cP(\cX)$, then $\cP_{p,q}(\cX)$ denotes the set of all probability measures $\mu$ on $\cX\times\cX$
such that the marginal distribution of the first (resp.\ second) component coincides with $p$ (resp.\ $q$).
The space $\cP(\{-1,0,1\})$
is complete with respect to (any and in particular) the $L_1$-Wasserstein metric, defined by
	\begin{equation*}
	\ell_1(p,q)=\inf\cbc{\Erw|X-Y|:X,Y\mbox{ are random variables with }
		\cL(X,Y)\in\cP_{p,q}(\{-1,0,1\})}.
	\end{equation*}
In words, the infimum of $\Erw|X-Y|$ is over all couplings $(X,Y)$ of the distributions $p,q$.
Such a coupling $(X,Y)$ is {\em optimal} if
$\ell_1(p,q)=\Erw|X-Y|$.
Finally,
let $\cP^*(\cbc{-1,0,1})$ be the set of all skewed probability measures on $\cbc{-1,0,1}$.
Being a closed subset of  $\cP(\cbc{-1,0,1})$, 
 $\cP^*(\cbc{-1,0,1})$ is complete with respect to $\ell_1(\nix,\nix)$.

As in the definition~\eqref{eqT}-\eqref{eqZ} of the operator $\cT=\cT_{\dplus,\dminus}$ for $p\in\cP(\{-1,0,1\})$ we let $(\eta_{p,i})_{i\ge1}$ be a family of independent
random variables with distribution $p$.
Further,  let $\gpm=\Po(\dpm)$ be independent of each other and of the the $(\eta_{p,i})_{i\ge1}$.
We introduce the shorthands
	$$Z_p=Z_{p,\dplus ,\dminus },\qquad
		\zpplus=\sum_{i=1}^{\gplus}\eta_{p,i}
		,\qquad \zpminus=\sum_{i=\gplus +1}^{\gplus+\gminus}\eta_{p,i}\quad\mbox{so that}\quad
			Z_p=\zpplus-\zpminus.$$
Also set $\lambda=c\sqrt{\dplus \ln \dplus }$ and recall that $c>0$ is a constant that we assume to be sufficiently large.

\begin{lemma}\label{Prop_fixedpoint1}
The operator $\cT$ maps $\cP^*(\cbc{-1,0,1})$ into itself.
\end{lemma}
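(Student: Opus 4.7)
The plan is to show that for skewed $p$, the probability $\cT(p)(1)$ is sufficiently close to $1$ using a Chernoff-type tail bound on the Skellam-like variable $Z_p$. Since $Z_p$ is integer-valued and $\psi$ acts as the identity on $\{-1,0,1\}$, we have $\cT(p)(1) = \Pr[\psi(Z_p)=1] = \Pr[Z_p\geq 1]$, so it suffices to prove $\Pr[Z_p\leq 0]\leq d_+^{-10}$.

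The first step is to rewrite $Z_p$ as a difference of two \emph{independent} Poisson variables via Poisson splitting. Splitting the $\gplus$ summands in $\zpplus$ by the value of $\eta_{p,i}$ yields independent variables $A_{+1}\sim\Po(\dplus p(1))$ and $A_{-1}\sim\Po(\dplus p(-1))$ so that $\zpplus = A_{+1}-A_{-1}$; doing the same for $\zpminus$ produces independent $B_{+1}\sim\Po(\dminus p(1))$ and $B_{-1}\sim\Po(\dminus p(-1))$, and all four are mutually independent. Hence
\[
Z_p \;=\; (A_{+1}+B_{-1})\;-\;(A_{-1}+B_{+1}) \;=\; X-Y,
\]
where $X\sim\Po(\alpha)$, $Y\sim\Po(\beta)$ are independent with $\alpha:=\dplus p(1)+\dminus p(-1)$ and $\beta:=\dplus p(-1)+\dminus p(1)$. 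The skewness of $p$ gives $p(1)-p(-1)\geq 1-2\dplus^{-10}$, and combined with the hypothesis $\dplus-\dminus\geq c\sqrt{\dplus\ln \dplus}$ we obtain $\alpha-\beta = (\dplus-\dminus)(p(1)-p(-1)) \geq (1-2\dplus^{-10})\,c\sqrt{\dplus\ln \dplus}$, while trivially $\alpha,\beta\leq \dplus+\dminus\leq 2\dplus$.

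The second step is the standard Chernoff bound for the Skellam distribution. For any $t>0$,
\[
\Pr[Z_p\leq 0]\;=\;\Pr[Y\geq X]\;\leq\;\Erw[e^{t(Y-X)}]\;=\;\exp\!\bigl(\beta(e^t-1)+\alpha(e^{-t}-1)\bigr).
\]
Optimising in $t$ at $t=\tfrac12\ln(\alpha/\beta)$ collapses the exponent to $2\sqrt{\alpha\beta}-\alpha-\beta = -(\sqrt\alpha-\sqrt\beta)^2$, so $\Pr[Z_p\leq 0]\leq\exp(-(\sqrt\alpha-\sqrt\beta)^2)$. Using $(\sqrt\alpha-\sqrt\beta)^2=(\alpha-\beta)^2/(\sqrt\alpha+\sqrt\beta)^2\geq (\alpha-\beta)^2/(8\dplus)$, the lower bound on $\alpha-\beta$ yields
\[
(\sqrt\alpha-\sqrt\beta)^2 \;\geq\; \tfrac{c^2(1-2\dplus^{-10})^2}{8}\,\ln \dplus \;\geq\; 10\ln \dplus
\]
once $c$ is a sufficiently large absolute constant and $\dplus$ is at least the constant guaranteed by the remark after \Thm~\ref{Thm_main}. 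Therefore $\Pr[Z_p\leq 0]\leq \dplus^{-10}$, which gives $\cT(p)(1)\geq 1-\dplus^{-10}$ as required.

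There is no genuine obstacle here beyond bookkeeping; the only thing to watch is that the decomposition $Z_p=X-Y$ uses independence of $\gplus,\gminus$ and the Poisson thinning/superposition that turns $A_{+1}+B_{-1}$ and $A_{-1}+B_{+1}$ into independent Poissons, and that the constant $c$ in the hypothesis of \Thm~\ref{Thm_main} was already chosen large enough to absorb the numerical factor $\tfrac18$ in the exponent.
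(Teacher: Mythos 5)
Your proof is correct, but it follows a genuinely different route from the paper's. The paper bounds $\pr\brk{Z_p<1}$ by a union bound over the two deviation events $\cbc{\zpplus\le \dplus-(\lambda-1)/2}$ and $\cbc{\zpminus\ge \dminus+(\lambda-1)/2}$ with $\lambda=c\sqrt{\dplus\ln\dplus}$: the second is handled by the crude estimate $\zpminus\le\gminus$ and a Poisson Chernoff bound, the first by first concentrating $\gplus$ and then dominating the number of $+1$ summands by a $\Bin(\dplus-\lambda/8,1-\dplus^{-10})$ variable. You instead use Poisson thinning/superposition to represent $Z_p$ \emph{exactly} as a Skellam variable $X-Y$ with independent $X\sim\Po(\alpha)$, $Y\sim\Po(\beta)$, $\alpha=\dplus p(1)+\dminus p(-1)$, $\beta=\dplus p(-1)+\dminus p(1)$, and then apply a single exponential-moment bound $\pr\brk{Z_p\le 0}\le\exp\bc{-(\sqrt\alpha-\sqrt\beta)^2}$. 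The thinning step is legitimate: the $\eta_{p,i}$ are i.i.d., independent of $\gpm$, and the two sums in $Z_p$ use disjoint index ranges, so the four thinned counts are indeed mutually independent Poissons; and your estimates $\alpha-\beta=(\dplus-\dminus)(p(1)-p(-1))\ge(1-2\dplus^{-10})c\sqrt{\dplus\ln\dplus}$ and $(\sqrt\alpha+\sqrt\beta)^2\le 8\dplus$ give $\pr\brk{Z_p\le0}\le\dplus^{-10}$ once $c$ (and, by the standing assumption, $\dplus$) is large. Your route buys an exact distributional identity and avoids the paper's binomial-domination step and the $\frac13\dplus^{-10}+\frac23\dplus^{-10}$ bookkeeping, so it is arguably cleaner; the only cosmetic point to add is the degenerate case $\beta=0$ (possible when $\dminus=0$ and $p(-1)=0$), where the optimising $t$ is infinite but the bound $\exp(-(\sqrt\alpha-\sqrt\beta)^2)=\exp(-\alpha)$ still holds by letting $t\to\infty$. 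The paper's argument, by contrast, needs nothing beyond off-the-shelf Chernoff bounds for Poisson and binomial variables, at the price of a slightly lossier union bound.
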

\begin{proof}
Suppose that $p\in\cP(\{-1,0,1\})$ is skewed.
Then
\begin{align}\label{eqIntoItself1}
 \pr\left(Z_p< 1\right)
 &\leq\pr\br{Z_{p,+}\leq \dplus -\frac{\lambda-1}2}
   +\pr\br{Z_{p,-}\geq \dminus +\frac{\lambda-1}2}.
\end{align}
Since $|\eta_{p,i}|\leq 1$ for all $i$, we can bound the second summand from above by 
invoking the Chernoff bound to obtain
\begin{align}\label{eqIntoItself2}
 \pr\br{\gminus \geq \dminus +\frac c2\sqrt{\dplus \ln \dplus }-\frac 12}&
 	< \frac 13 \dplus ^{-10},
\end{align}
provided $c$ is large enough.
To bound the other summand from above we use that $(\eta_{p,i})_{i\geq 1}$ is a sequence of independent {\em skewed} random variables, whence by the Chernoff bound
\begin{align}\nonumber
 \pr\br{\zpplus\leq \dplus -\frac{\lambda-1}2}
&\leq
	\pr\br{\abs{\gplus-\dplus }>\lambda/8} + \pr\br{\zpminus\leq \dplus -\frac{\lambda-1}2\bigg|\gplus\geq \dplus -\lambda/8}
\\
&\leq 	\frac 13 \dplus ^{-10}+\pr\brk{\Bin(d_+-\lambda/8,1-\dplus ^{-10})\leq \dplus -\lambda/7}
<\frac 23 d_+^{-10},\label{eqIntoItself3}
\end{align}
provided that $c$ is sufficiently big.
Combining (\ref{eqIntoItself1})--(\ref{eqIntoItself3}) completes the proof.
\end{proof}

\begin{lemma}\label{Prop_fixedpoint2}
The operator $\cT$ is $\ell_1$-contracting on $\cP^*(\cbc{-1,0,1})$.
\end{lemma}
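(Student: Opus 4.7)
The plan is to build an explicit coupling of $\cT(p)$ and $\cT(q)$ by coupling the underlying summands index-by-index, then show that the probability of disagreement is subject to a factor of roughly $\dplus^{-9}$ coming from the Chernoff bound already established in \Lem~\ref{Prop_fixedpoint1}. Specifically, let $(X_i,Y_i)_{i\ge 1}$ be i.i.d.\ draws from an optimal $\ell_1$-coupling of $p,q$, so that $\delta:=\pr(X_i\neq Y_i)\leq\Erw|X_i-Y_i|=\ell_1(p,q)$. Independently sample $\gpm=\Po(\dpm)$, put $\gamma:=\gplus+\gminus$ and $\sigma_i:=+1$ for $i\leq\gplus$, $\sigma_i:=-1$ for $\gplus<i\leq\gamma$. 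Set $Z_p:=\sum_{i=1}^\gamma\sigma_iX_i$ and $Z_q:=\sum_{i=1}^\gamma\sigma_iY_i$; this matches the definition of $Z_{p,\dplus,\dminus}$, $Z_{q,\dplus,\dminus}$, so $(\psi(Z_p),\psi(Z_q))$ is a valid coupling of $\cT(p),\cT(q)$.

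To analyse $\Erw|\psi(Z_p)-\psi(Z_q)|$, I interpolate. Define
\[
Z_t:=\sum_{i\leq t}\sigma_iX_i+\sum_{t<i\leq\gamma}\sigma_iY_i,\qquad 0\leq t\leq\gamma,
\]
so that $Z_0=Z_q$, $Z_\gamma=Z_p$, and $Z_t-Z_{t-1}=\sigma_t(X_t-Y_t)$. Then $|Z_t-Z_{t-1}|\leq 2$, with equality to $0$ unless $X_t\neq Y_t$. By the triangle inequality,
\[
|\psi(Z_p)-\psi(Z_q)|\leq\sum_{t=1}^\gamma|\psi(Z_t)-\psi(Z_{t-1})|.
\]
Each summand is at most $2$ and can be nonzero only if (a) $X_t\neq Y_t$, and (b) at least one of $Z_{t-1},Z_t$ is in $\{-1,0,1\}$ (because $\psi$ is constant on integers $\geq 1$ and on integers $\leq -1$). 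Introducing the ``leave-one-out'' sum $W_t:=Z_{t-1}-\sigma_tY_t=Z_t-\sigma_tX_t$, condition (b) forces $W_t\leq 2$ (since $|\sigma_tX_t|,|\sigma_tY_t|\leq 1$, we have $\min(Z_{t-1},Z_t)\geq W_t-1$). The key point is that $W_t$ is a function of $\gpm$ and $(X_s,Y_s)_{s\neq t}$, hence is independent of $(X_t,Y_t)$ conditional on $\{t\leq\gamma\}$. Therefore
\[
\Erw|\psi(Z_p)-\psi(Z_q)|\leq 2\delta\sum_{t\geq 1}\pr\br{t\leq\gamma,\;W_t\leq 2}.
\]

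The main (and essentially only) technical step is showing $\pr(W_t\leq 2\mid t\leq\gamma)\leq O(\dplus^{-10})$. This is where the skewness of $p$ and $q$ is used: $W_t$ has exactly the same structure as $Z_p$ but with $\gamma-1$ i.i.d.\ skewed terms (some drawn from $p$, some from $q$, both of which are skewed), so the Chernoff argument of \Lem~\ref{Prop_fixedpoint1} goes through almost verbatim after splitting $W_t=W_t^+-W_t^-$ and controlling the two Binomial tails. Granted this, $\sum_{t\geq 1}\pr(t\leq\gamma,W_t\leq 2)\leq O(\dplus^{-10})\,\Erw\gamma=O(\dplus^{-9})$, and so
\[
\ell_1(\cT(p),\cT(q))\leq\Erw|\psi(Z_p)-\psi(Z_q)|\leq O(\dplus^{-9})\,\delta\leq O(\dplus^{-9})\,\ell_1(p,q).
\]
Since we may assume $\dplus$ is sufficiently large, the prefactor is strictly less than $1$, proving the contraction.

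The main obstacle is the Chernoff bound for $W_t$; the subtlety is simply that $W_t$ mixes $X$-draws (from $p$) and $Y$-draws (from $q$) and uses $\gamma-1$ rather than $\gamma$ terms. Both of these are benign because every individual summand is skewed and the distribution of $\gamma-1$ is dominated in the appropriate sense by that of $\gamma$; a trivial handling of the rare event $\{\gamma\leq d_+/2\}$ via $\pr(\gamma\leq d_+/2)=\exp(-\Omega(\dplus))$ absorbs all degenerate cases.
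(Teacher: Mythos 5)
Your coupling-plus-telescoping scheme is a genuinely different (and in places cleaner) route than the paper's, which instead partitions the probability space by the three events $\mathfrak A_1,\mathfrak A_2,\mathfrak A_3$ and bounds conditional expectations; your reduction of $|\psi(Z_p)-\psi(Z_q)|$ to swaps, the observation that a swap can matter only when the leave-one-out sum satisfies $W_t\le 2$, and the independence of $\{X_t\neq Y_t\}$ from $\{t\le\gamma,\;W_t\le 2\}$ are all correct, as is $\delta\le\ell_1(p,q)$ for $\{-1,0,1\}$-valued optimal couplings. The gap is in the step you describe as going through ``almost verbatim'': you need $\pr\br{W_t\le 2\mid \gamma\ge t}=O(\dplus^{-10})$ (or at least $O(\dplus^{-2})$) \emph{uniformly in $t$}, and for $t\gg\dplus$ the conditioning on $\{\gamma\ge t\}$ destroys the set-up of \Lem~\ref{Prop_fixedpoint1}: conditionally, $\gplus$ and $\gminus$ are both of order $t$, so the fixed-threshold deviation events used there (such as $\{\gminus\ge\dminus+\lambda/2\}$ or $\{|\gplus-\dplus|>\lambda/8\}$) are no longer rare, and the relevant slack is no longer $\lambda$ but a quantity growing linearly in $\gamma$; also the number of non-$(+1)$ draws among the $\gamma-1$ skewed summands has conditional mean of order $\gamma\dplus^{-10}$, which is not negligible against a fixed threshold once $t$ is large. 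Moreover, the rare event you single out, $\{\gamma\le \dplus/2\}$, is the wrong tail: small $\gamma$ is indeed harmless for your sum, and it is large $\gamma$ that threatens both the uniform conditional bound and the bookkeeping behind the factor $\Erw\gamma$.

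The fix is exactly the device the paper uses with its event $\mathfrak A_2$: split off $\{\gamma\ge 2\dplus\}$ and bound its total contribution by $2\delta\,\Erw\brk{\gamma\vecone\cbc{\gamma\ge 2\dplus}}\le 2\delta\dplus^{-1}$ (Chernoff for the Poisson $\gamma$), and on $\{\gamma<2\dplus\}$ run your estimate for $W_t$ conditionally on $\gpm$, where the number of summands is at most $2\dplus$, so the $\Bin\bc{\le 2\dplus,\le\dplus^{-10}}$ tail bound with slack of order $\lambda$ applies and the atypical-$\gpm$ contribution sums to at most $2\dplus\cdot O(\dplus^{-10})$. With this truncation your argument closes, but it yields a contraction factor $O(\dplus^{-1})$ rather than the claimed $O(\dplus^{-9})$ — still below $\tfrac12$ for $\dplus$ large, hence sufficient. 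As written, however, the uniform estimate $\pr\br{W_t\le 2\mid t\le\gamma}=O(\dplus^{-10})$ is asserted rather than proved, and this is the one place where your proof currently does not stand on its own.
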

\begin{proof}
Let $p, q\in\mathcal P^*(\cbc{-1,0,1})$.
We aim to show that  $\ell_1(\cT(p),\cT(q))\leq\frac12\ell_1(p,q)$.
To this end, we let $(\eta_{p,i},\eta_{q,i})_{i\geq 1}$ be a family of random variables with distribution $p$ resp.\ $q$
such that $(\eta_{p,i})_{i\geq 1}$ are independent and $(\eta_{q,i})_{i\geq 1}$ are independent
but such that the pair $(\eta_{p,i},\eta_{q,i})$ is an optimal coupling for every $i$.
Then by the definition of $\ell_1(\nix,\nix)$,
	\begin{align}\label{FP_bound_1}
	\ell_1(\cT(p),\cT(q))&\leq\Erw\abs{\psi(Z_p)-\psi(Z_q)}.
	\end{align}
To estimate the r.h.s., let $\tilde\eta_{p,i}=\vecone\{\eta_{p,i}=1\}$, $\tilde\eta_{q,i}=\vecone\{\eta_{q,i}=1\}$.
Further, let $\mathfrak F_i$ be the $\sigma$-algebra generated by $\tilde\eta_{p,i},\tilde\eta_{q,i}$ and let $\mathfrak F$ be the $\sigma$-algebra
generated by $\gplus,\gminus$ and the random variables $(\tilde\eta_{p,i},\tilde\eta_{q,i})_{i\geq1}$.
Additionally, let $\gamma=\gplus+\gminus$ and consider the three events
\begin{align*}
\mathfrak A_1&=\left\{\sum_{i=1}^{\gamma}\tilde\eta_{p,i}\tilde\eta_{q,i}\geq\gamma-10\right\},&
\mathfrak A_2&=\left\{\gamma\geq 2\dplus \right\},&
\mathfrak A_3&=\left\{\gplus-\gminus\leq 20\right\}.
\end{align*}
We are going to bound
$|\psi(Z_p)-\psi(Z_q)|$ on $\mathfrak A_1\setminus(\mathfrak A_2\cup\mathfrak A_3)$,
	$\overline{\mathfrak A_1\cup \mathfrak A_2\cup{\mathfrak A_3}}$,
	$\mathfrak A_2$ and $\mathfrak A_3\setminus\mathfrak A_2$ separately.
The bound on the first event is immediate:
if $\mathfrak A_1\setminus(\mathfrak A_2\cup\mathfrak A_3)$ occurs, then $\psi(Z_p)=\psi(Z_q)=1$ with certainty.
Hence,
\begin{align}\label{eqKathrin0} 
\Erw\brk{|\psi(Z_p)-\psi(Z_q)|\cdot \vecone_{\mathfrak A_1\setminus(\mathfrak A_2\cup\mathfrak A_3)}}=0.
\end{align}

Let us turn to the second event $\overline{\mathfrak A_1\cup \mathfrak A_2\cup{\mathfrak A_3}}$.
Because the pairs $(\eta_{p,i},\eta_{q,i})_{i\geq1}$ are mutually independent, we find
	\begin{align}\label{eqKathrin1a}
	\mathbb E\left[\left.\left|\eta_{p,i}-\eta_{q,i}\right|\right|\mathfrak F\right]
		  =\mathbb E\left[\left.\left|\eta_{p,i}-\eta_{q,i}\right|\right|\mathfrak F_i\right]\qquad\mbox{for all }i\ge 1.
		\end{align}
Clearly, if $\tilde\eta_{p,i}\tilde\eta_{q,i}=1$, then $\eta_{p,i}-\eta_{q,i}=0$.
Consequently,
	\begin{align}\label{eqKathrin1}
	\Erw\left[\left.\left|\eta_{p,i}-\eta_{q,i}\right|\right|\mathfrak F_i\right]
		&\leq\frac{\Erw|\eta_{p,i}-\eta_{q,i}|}{\pr[\tilde\eta_{p,i}\tilde\eta_{q,i}=0]}
			=\frac{\Erw|\eta_{p,1}-\eta_{q,1}|}{\pr[\tilde\eta_{p,1}\tilde\eta_{q,1}=0]}.
	\end{align}
Since the events $\mathfrak A_1,\mathfrak A_2,\mathfrak A_3$ are $\mathfrak F$-measurable and because $\bar{\mathfrak A}_2$  ensures that $\gamma<2\dplus $,
	(\ref{eqKathrin1a}) and (\ref{eqKathrin1}) yield
\begin{align}\label{eqKathrin3}
\mathbb E[|\psi(Z_p)-\psi(Z_q)|\,|\mathfrak F]\vecone_{\overline{\mathfrak A_1\cup \mathfrak A_2\cup{\mathfrak A_3}}}
    \leq\frac{ 2\dplus  \Erw|\eta_{p,1}-\eta_{q,1}|}{\pr[\tilde\eta_{p,1}\tilde\eta_{q,1}=0]}
 \cdot   \vecone_{\overline{\mathfrak A_1\cup \mathfrak A_2\cup{\mathfrak A_3}}}.
    \end{align}
Further, because the pairs $(\eta_{p,i},\eta_{q,i})_{i\ge 1}$ are independent and because $p,q$ are skewed,
\begin{align}\label{eqKathrin4}
\pr\br{\overline{\mathfrak A_1\cup \mathfrak A_2\cup{\mathfrak A_3}}}
&\leq\pr\br{\gamma\leq2\dplus ,\sum_{i=1}^{\gamma}\tilde\eta_{p,i}\tilde\eta_{q,i}\leq\gamma-10}
\leq \left(2\dplus \pr\left(\tilde{\eta}_{p,1}\tilde{\eta}_{q,1}=0\right)\right)^{10}.
\end{align}
Combining (\ref{eqKathrin3}) and (\ref{eqKathrin4}), we obtain
\begin{align}\label{eqKathrin5}
\Erw\brk{\Erw\brk{|\psi(Z_p)-\psi(Z_q)||\mathfrak F}\vecone_{\overline{\mathfrak A_1\cup \mathfrak A_2\cup{\mathfrak A_3}}}}
\leq(2d_+)^{11}\pr\left(\tilde{\eta}_{p,1}\tilde{\eta}_{q,1}=0\right)^9\Erw|\eta_{p,1}-\eta_{q,1}|.
\end{align}
Since $p,q$ are skewed, we furthermore obtain $\pr\br{\tilde{\eta}_{p,1}\tilde{\eta}_{q,1}=0}\leq 2d_+^{-10}$. Therefore
\begin{align*}
\Erw\brk{|\psi(Z_p)-\psi(Z_q)|\vecone_{\overline{\mathfrak A_1\cup \mathfrak A_2\cup{\mathfrak A_3}}}}
&=
\Erw\brk{\Erw\brk{|\psi(Z_p)-\psi(Z_q)||\mathfrak F}\vecone_{\overline{\mathfrak A_1\cup \mathfrak A_2\cup{\mathfrak A_3}}}}
\leq 2^{20}\dplus ^{-79}\Erw|\eta_{p,1}-\eta_{q,1}|.
\end{align*}
With respect to $\mathfrak A_2$, the triangle inequality yields
	\begin{align}\label{eqKathrin10}
	\Erw[|\psi(Z_p)-\psi(Z_q)|\vecone_{\mathfrak A_2}]&\leq2\Erw|\eta_{p,1}-\eta_{q,1}|\cdot\Erw[\gamma\vecone_{\mathfrak A_2}].
	\end{align}
Further, since $\gamma=\Po(\dplus +\dminus )$, the Chernoff bound entails that $\Erw[\gamma\vecone_{\mathfrak A_2}]\leq \dplus ^{-1}$ if the constant $c$
is chosen large enough.
Combining this estimate with (\ref{eqKathrin10}), we get
	\begin{align}\label{eqKathrin11}
	\Erw[|\psi(Z_p)-\psi(Z_q)|\vecone_{\mathfrak A_2}]&\leq2\dplus ^{-1}\Erw|\eta_{p,1}-\eta_{q,1}|.
	\end{align}

Finally, on $\mathfrak A_3\setminus\mathfrak A_2$ we have
	\begin{align}\label{eqKathrin20}
	\Erw[|\psi(Z_p)-\psi(Z_q)|\vecone_{\mathfrak A_3\setminus\mathfrak A_2}]&\leq4\dplus \Erw|\eta_{p,1}-\eta_{q,1}|
		\pr\brk{\gplus-\gminus\leq20}.
	\end{align}
Since $\gpm=\Po(\dpm)$ and $\dplus -\dminus \geq\lambda$, the Chernoff bound yields $\pr\brk{\gplus-\gminus\leq20}\leq \dplus ^{-2}$,
if $c$ is large enough.
Hence, (\ref{eqKathrin20}) implies
	\begin{align}\label{eqKathrin21}
	\Erw[|\psi(Z_p)-\psi(Z_q)|\vecone_{\mathfrak A_3\setminus\mathfrak A_2}]&\leq4\dplus ^{-1}\Erw|\eta_{p,1}-\eta_{q,1}|.
	\end{align}
Finally, the assertion follows from (\ref{FP_bound_1}), (\ref{eqKathrin0}),
(\ref{eqKathrin5}),	(\ref{eqKathrin11})	and (\ref{eqKathrin21}).
\end{proof}

\begin{proof}[Proof of \Prop~\ref{Prop_fix}]
The assertion follows from \Lem s~\ref{Prop_fixedpoint1} and~\ref{Prop_fixedpoint2} and the Banach fixed point theorem.
\end{proof}

\end{document}